\newcommand{\eps}{\varepsilon} 
\newcommand{\ed}{\mathrm{ed}} 
\newcommand{\edd}{\underline{\mathrm{ed}}} 
\newcommand{\wh}{\widehat}  
\newcommand{\wt}{\widetilde} 
\newcommand{\cov}{\mathrm{cov}} 
\newcommand{\ind}{\mathrm{ind}} 
\newcommand{\LCS}{\mathrm{LCS}} 
\newtheorem{theorem}{Theorem}[section]
\newtheorem{lemma}[theorem]{Lemma}
\newtheorem{corollary}[theorem]{Corollary}
\begin{document}

\title{Sequence assembly from corrupted shotgun reads}
\author{
	Shirshendu Ganguly
	\thanks{University of Washington; \texttt{sganguly@math.washington.edu}.}
	\and
	Elchanan Mossel
	\thanks{University of Pennsylvania and University of California, Berkeley; \texttt{mossel@wharton.upenn.edu}.}
	\and 
	Mikl\'os Z.\ R\'acz
	\thanks{Microsoft Research; \texttt{miracz@microsoft.com}.}
}
\date{\today}

\maketitle


\begin{abstract}
The prevalent technique for DNA sequencing consists of two main steps:
shotgun sequencing,  
where many randomly located fragments, called reads, are extracted from the overall sequence, 
followed by an assembly algorithm that aims to reconstruct the original sequence. 
There are many different technologies that generate the reads: 
widely-used second-generation methods create short reads with low error rates, 
while emerging third-generation methods create long reads with high error rates. 
Both error \emph{rates} and error \emph{profiles} differ among methods, 
so reconstruction algorithms are often tailored to specific shotgun sequencing technologies. 
As these methods change over time, 
a fundamental question is whether there exist reconstruction algorithms which are \emph{robust}, 
i.e., which perform well under a wide range of error distributions.

Here we study this question of sequence assembly from corrupted reads. 
We make no assumption on the \emph{types} of errors in the reads, 
but only assume a bound on their \emph{magnitude}.  
More precisely, for each read we assume that 
instead of receiving the true read with no errors, 
we receive a corrupted read which has edit distance at most 
$\varepsilon$ times the length of the read 
from the true read. 
We show that if the reads are long enough and there are sufficiently many of them, 
then approximate reconstruction is possible: 
we construct a simple algorithm such that 
for almost all original sequences the output of the algorithm is 
a sequence whose edit distance from the original one is at most 
$O(\varepsilon)$ times the length of the original sequence. 
\end{abstract}


\section{Introduction} \label{sec:intro} 

DNA sequencing is by now an essential element of a variety of biological and clinical studies. 
Current de novo sequencing technologies typically have two main stages. 
First, many randomly located fragments, called reads, are extracted from the DNA sequence in a process called shotgun sequencing. 
Next, an assembly algorithm aims to reconstruct the original sequence based on overlaps between the reads.

The rise of second-generation sequencing methods, such as Illumina, have resulted in many advances in the past decade because they generate high-throughput data cheaply and quickly. 
However, these methods produce short reads (a few hundred basepairs long) in order to have a low error rate ($1$-$3\%$), which results in incomplete and fragmented assemblies~\cite{Salzberg:10}. 
Emerging technologies, such as PacBio's Single Molecule Real-Time sequencing technology and Oxford Nanopore Technologies, were developed in part to solve this problem. 
They produce long reads (over ten thousand basepairs long), but currently suffer from a high error rate ($10$-$22\%$) (see, e.g.,~\cite{Chin_et_al:13,LoQuSi:15,Li:15}).

Not only do these different shotgun sequencing methods produce reads with different error \emph{rates}, 
they also have different error \emph{profiles}, 
e.g., due to various systematic errors. 
Consequently assembly algorithms are often tailored to specific sequencing technologies 
to exploit their unique properties. 
As these technologies will inevitably change and new ones will arise, 
a fundamental question is the \emph{robustness} of reconstruction algorithms. 
Will the current ones still be useful a decade from now? 
Are there algorithms which perform well under a wide variety of error distributions? 
This is the question we study in this paper.

Several recent papers 
have taken an \emph{information-theoretic} point of view 
to the sequence assembly problem. 
The basic question is: what are the fundamental limits to \emph{any} assembly algorithm? 
Given a sequencing technology and the statistics of the DNA sequence, 
how long do the reads need to be and 
how many are required for reconstruction? 
Motahari, Bresler, and Tse~\cite{MoBrTs:13} study this question assuming an i.i.d.\ DNA sequence and error-free reads, 
and show a sharp phase transition: 
if the reads are short enough to have repeats, then reconstruction is impossible, 
while as long as the reads are long enough to have no repeats, 
the necessary condition of having enough reads to cover the whole DNA sequence is essentially sufficient. 
Previously Dyer, Frieze, and Suen~\cite{DyFrSu:94} obtained the same phase transition in the length of the reads 
assuming sequencing by hybridization, i.e., that a copy of every read is available. 
Several extensions and variations of this problem have been studied. 
Adding some amount of i.i.d.\ noise to the reads still allows for reconstruction of the perfect layout of the reads~\cite{MoRaTsMa:13}. 
In~\cite{BrBrTs:13} the authors give a sufficient condition for reconstruction for any sequence based on its repeat statistics, assuming error-free reads. 
This was later extended to allow the reads to come from an erasure error model~\cite{ShCoTs:15}. 
These papers are discussed in more detail later.

In this paper we continue this line of work, assuming an i.i.d.\ DNA sequence as in~\cite{MoBrTs:13,MoRaTsMa:13}. 
The main novelty in the model we consider is a strong \emph{adversarial corruption/error model} on the reads. 
More precisely, for each read we assume that instead of receiving the true read with no errors, 
we receive a corrupted read with the edit distance between the true and the corrupted reads being at most $\varepsilon$ times the length of the true read. 
Given such a strong adversarial error model, we relax our goal from perfect reconstruction to approximate reconstruction. 
Our main contribution is to show that if the reads are long enough and there are sufficiently many of them, 
then approximate reconstruction is possible: 
we present a simple sequential algorithm for which 
the edit distance between 
the original sequence and 
the output of the algorithm 
is at most 
$O(\varepsilon)$ times the length of the original sequence.

\section{Problem setting} \label{sec:problem} 

We are interested in approximately recovering a long sequence of interest 
from a set of randomly chosen shorter reads which are arbitrarily corrupted up to a certain extent. 
Consequently the problem has four main parameters: 
sequence length $n$, 
read length $L$, 
number of reads $N$, and 
error/corruption rate $\eps$; 
a fifth parameter, $\delta$, measures the probability of unsuccessful approximate reconstruction.

Before defining the problem precisely, we introduce some notation. 
Let $\Sigma$ be a finite alphabet from which the entries of the sequence come from; 
in the case of DNA sequencing we have $\Sigma = \left\{ A, C, G, T \right\}$. 
For a sequence $x = \left( x_1, x_2, \dots, x_n \right) \in \Sigma^n$ 
and integers $i$ and $j$, let $x\left[i,j\right]$ denote the substring $\left( x_i, x_{i+1}, \dots, x_j \right)$. 
Let $\Sigma^{*} = \cup_n \Sigma^n$. 
For $x,y \in \Sigma^{*}$, 
let $\ed \left( x, y \right)$ denote the edit distance between $x$ and $y$, 
i.e., the minimum number of deletion, insertion, or substitution operations necessary to go from $x$ to $y$.

The approximate reconstruction problem with parameters $\left( n, L, N, \eps, \delta \right)$ is then defined as follows; 
see Fig.~\ref{fig:problem} for an illustration. 
\begin{itemize}
 \item The sequence of interest, $X \in \Sigma^n$, is chosen uniformly at random among all possible sequences; 
  i.e., the entries of $X$ are i.i.d.\ chosen uniformly from the alphabet $\Sigma$.\footnote{We focus on the uniform distribution for simplicity; extensions to certain other distributions should also be tractable.} 
 \item The data are corrupted reads of $X$, defined as follows. 
  Let $\left\{ T_i \right\}_{i=1}^{N}$ be i.i.d.\ uniform in $\left\{1,2, \dots, n - L + 1 \right\}$ (the starting positions of the reads), 
  and let $R_i = X \left[ T_i, T_i + L - 1 \right]$ be the $i^{\text{th}}$ (uncorrupted) read. 
  Instead of receiving the (multi)set of (uncorrupted) reads $\mathcal{R} = \left\{ R_1, R_2, \dots R_N \right\}$, we receive a (multi)set of corrupted reads 
  \[
    \wt{\mathcal{R}} = \left\{ \wt{R}_1, \wt{R}_2, \dots, \wt{R}_N \right\},
  \]
  where the only thing we know is that 
  \begin{equation}\label{eq:error_rate}
    \ed \left( R_i, \wt{R}_i \right) \leq \eps L, \qquad \text{ for every } i \in \left[N\right],
  \end{equation}
  but otherwise the $\wt{R}_i$ can be arbitrary.\footnote{The choice of edit distance arises naturally from DNA sequencing; one can consider the problem with other notions of distance as well.}
 \item The goal of an approximate reconstruction algorithm is to output a sequence $\wh{X} \in \Sigma^{*}$ such that 
    \begin{equation}\label{eq:goal}
      \ed \left( X, \wh{X} \right) \leq C \eps n
    \end{equation}
  for some absolute constant $C$, with probability at least $1 - \delta$ (for all $n$ large enough).
\end{itemize}

\begin{figure}[h!]
  \centering
  \includegraphics[width=\linewidth]{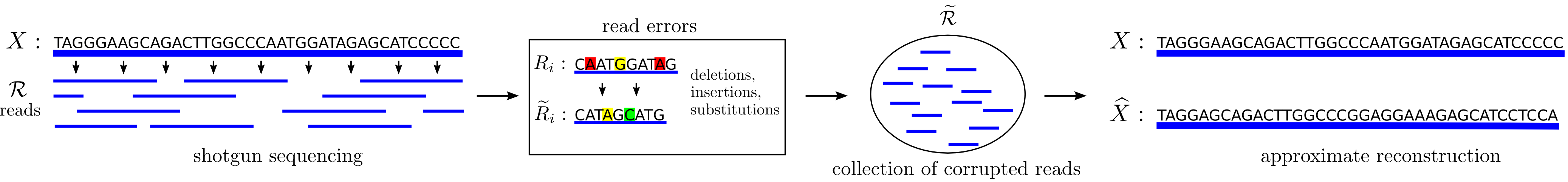}
  \caption{A schematic of the approximate reconstruction problem.}
  \label{fig:problem}
\end{figure}

When $\eps = 0$ (i.e., there are no errors in the reads), this amounts to exact reconstruction of the original sequence, 
which was studied and solved in~\cite{MoBrTs:13}. 

If an algorithm achieves~\eqref{eq:goal} for a given error rate $\eps > 0$ with a given constant $C$, 
then we say that it is an \emph{approximate reconstruction algorithm for error rate $\eps$ with approximation factor $C$}. 
Note that for the empty string $\emptyset$ we have $\ed \left( X, \emptyset \right) = n$, 
so for a given approximation factor $C$ the problem is interesting only for $\eps < 1 / C$; 
of course the goal is minimize the approximation factor $C$.

\section{Results} \label{sec:results} 

Before presenting our results on the approximate reconstruction problem described above, 
we first recall the main results of~\cite{MoBrTs:13} characterizing the case of $\eps = 0$, i.e., error-free reads. 
In both cases the interesting regime is when the read length $L$ scales as the logarithm of the sequence length $n$, 
so in the following we let 
$L = \overline{L} \ln \left( n \right)$, 
where $\overline{L}$ is constant.\footnote{For the sake of readability, we refrain from rounding non-integer values that are meant to be integers, such as $\overline{L} \ln \left( n \right)$.}

When there are no errors in the reads, 
then there are two obstructions to reconstruction.
First, if the reads are too short, 
then there will be repeats coming from different parts of the original sequence, 
which create ambiguity in reconstruction, 
even if all substrings of length $L$ of the sequence are given. 
This observation goes back to the work of Ukkonen~\cite{Ukkonen:92}, 
who characterized the patterns that preclude exact reconstruction of the sequence. 
This is often referred to as the 
\emph{repeat-limited regime}.

Second, even if $L$ is large enough, it is necessary to have enough reads to cover the original sequence; 
otherwise the data does not contain enough information for exact reconstruction. 
Define 
$N_{\cov} = N_{\cov} \left( n, L, \delta \right)$ 
as the minimum number of reads necessary such that with probability at least $1 - \delta$ 
the randomly located reads cover the entire original sequence. 
Lander and Waterman~\cite{LanderWaterman:88} first studied the coverage properties of shotgun sequencing 
and showed that 
$N_{\cov} \approx \frac{n}{L} \ln \left( \frac{n}{L \delta} \right)$. 
This is often referred to as the 
\emph{coverage-limited regime}.

In the main result of~\cite{MoBrTs:13}, Motahari, Bresler, and Tse show that when there are no errors in the reads, 
these are the only two obstructions to reconstruction. 

\begin{theorem}[Exact reconstruction from error-free reads~\cite{MoBrTs:13}] \label{thm:MBT13}
  If $\overline{L} < 2 / \ln \left( \left| \Sigma \right| \right)$, then no algorithm can reconstruct the original sequence exactly with probability greater than $1/2 + o(1)$ (as $n \to \infty$). 

  If $\overline{L} > 2 / \ln \left( \left| \Sigma \right| \right)$, then exact reconstruction is possible, 
and the necessary condition of coverage is essentially sufficient. 
  More precisely, let $N_{\min} \left( n, L, \delta \right)$ denote the minimum number of reads needed to reconstruct the original sequence exactly with probability at least $1 - \delta$. 
  If $\overline{L} > 2 / \ln \left( \left| \Sigma \right| \right)$, then for every fixed $\delta \in (0,1/2)$, 
  \[
    \lim_{n \to \infty, L = \overline{L} \ln \left(n\right)} \frac{N_{\min} \left( n, L, \delta \right)}{N_{\cov} \left( n, L, \delta \right)} = 1.
  \]
\end{theorem}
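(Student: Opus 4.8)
\medskip
\noindent\textbf{Proof strategy.}
The plan is to prove the two halves of the dichotomy separately; the quantity governing both is the expected number of long repeated substrings of $X$. For a window length $\ell=\overline{L}\ln(n)\,(1+o(1))$, the expected number of ordered pairs of distinct positions at which $X$ carries equal length-$\ell$ blocks is of order $n^{2}|\Sigma|^{-\ell}=n^{\,2-\overline{L}\ln|\Sigma|+o(1)}$, and this exponent changes sign exactly at $\overline{L}=2/\ln|\Sigma|$; this is the source of the threshold. Throughout I would use the routine observation that the data is an unlabelled multiset of substrings drawn from uniformly random positions, so its conditional law given the multiset of $L$-mers of $X$ (counted with multiplicity) depends on nothing else about $X$.

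For the impossibility half ($\overline{L}<2/\ln|\Sigma|$), the observation above reduces the task to exhibiting, with probability $1-o(1)$, a \emph{distinct} sequence $X'$ of the same length with the same $L$-mer composition as $X$: conditioned on that composition $X$ is then uniform over a set of size at least two, and since the data is conditionally independent of $X$ given the composition, no estimator succeeds with conditional probability exceeding $1/2$, so the overall success probability is at most $1/2+o(1)$. The witness is an \emph{interleaved repeat}: two repeated blocks of length at least $L$ whose copies alternate along $X$, say $X=A\,r_1\,B\,r_2\,C\,r_1\,D\,r_2\,E$ (with all relevant blocks long enough); interchanging $B$ and $D$ yields $X'=A\,r_1\,D\,r_2\,C\,r_1\,B\,r_2\,E$, which has the same length and the same $L$-mer composition --- the junction $L$-mers are merely permuted between the two copies of each block and the interior $L$-mers are untouched --- and which is generically distinct from $X$. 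The substantive step is to show that such a configuration is present with probability $1-o(1)$: the first-moment count of interleaved repeats is of order $n^{\,4-2\overline{L}\ln|\Sigma|}$, which tends to infinity precisely when $\overline{L}\ln|\Sigma|<2$, and this must be promoted to an existence statement by a second-moment estimate that controls the correlations among overlapping candidate quadruples of positions. I expect this existence statement to be the main obstacle on the lower-bound side.

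For the achievability half ($\overline{L}>2/\ln|\Sigma|$), I would combine a structural property of $X$ with a coverage property of the reads. Since here $2-\overline{L}\ln|\Sigma|<0$, the first-moment bound above (applied with $\ell=L-1$) shows that the expected number of repeated $(L-1)$-mers tends to $0$, so with probability $1-o(1)$ the sequence $X$ has no repeated $(L-1)$-mer and in particular contains none of the interleaved- or triple-repeat patterns that, by the classical theory (Ukkonen, Pevzner), obstruct reconstruction. On this event, once the reads cover $X$ --- more precisely, once consecutive reads, ordered by position, overlap --- one recovers $X$ from the overlap graph of the reads: a long overlap between two reads is necessarily genuine because the overlapping substring occurs only once in $X$, and the technical core is to show that the (rare but numerous) junctions where the true overlap between consecutive reads is short cannot be reassembled incorrectly, since a spurious alternative layout would force a forbidden repeat in $X$. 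For the quantitative matching, $N_{\min}\geq(1-o(1))N_{\cov}$ because an uncovered position leaves a character of $X$ independent of the data, hence unrecoverable with probability bounded away from zero; conversely, taking $N=(1+o(1))N_{\cov}(n,L,\delta/2)$ makes the coverage event hold with probability at least $1-\delta/2$, and a union bound with the structural event yields $N_{\min}\leq(1+o(1))N_{\cov}$. That these bounds agree to leading order --- so that the displayed ratio tends to $1$ --- follows because $N_{\cov}\approx\tfrac{n}{L}\ln\!\big(\tfrac{n}{L\delta}\big)$ depends on $\delta$ only through a lower-order additive term and is insensitive to the $(1+o(1))$ factors. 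The delicate points on this side are precisely this bookkeeping --- in particular the boundary effect that the last $L-1$ positions are the hardest to cover and fix the constant in $N_{\cov}$ --- together with the control of the short-overlap junctions in the reconstruction step.
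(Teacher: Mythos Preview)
This theorem is not proved in the present paper: it is quoted verbatim as the main result of Motahari, Bresler, and Tse~\cite{MoBrTs:13} and used only as background motivation for the corrupted-read model. There is therefore no proof in the paper to compare your proposal against.

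For what it is worth, your outline is a faithful high-level summary of the argument in~\cite{MoBrTs:13}: the impossibility side does proceed by showing that below the threshold an interleaved pair of $L$-repeats occurs with high probability (via a second-moment calculation), so the $L$-spectrum fails to determine $X$; and the achievability side does combine a first-moment bound ruling out repeats of length slightly below $L$ with the Lander--Waterman coverage estimate, matching $N_{\min}$ to $N_{\cov}$ up to lower order. The points you flag as delicate---the second-moment control of overlapping quadruples on the lower-bound side, and the handling of short-overlap junctions together with the boundary effect in $N_{\cov}$ on the upper-bound side---are indeed where the work lies in the original paper, but none of this is reproduced here.
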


When the reads are corrupted, the two obstructions to reconstruction discussed above remain. 
In fact, the repeat-limited regime is slightly larger in the presence of corruption. 
If 
$\overline{L} < 2 / \left\{  \left( 1 - \eps \right) \ln \left( \left| \Sigma \right| \right) \right\}$ 
then consider the corruption process which deletes the last $\eps L$ coordinates of every read. 
The reads then have normalized length $\left( 1 - \eps \right) \overline{L} < 2 / \ln \left( \left| \Sigma \right| \right)$, 
so by Theorem~\ref{thm:MBT13} exact reconstruction is impossible.

The picture describing successful algorithms is less clean when the reads are corrupted. 
This is primarily due to the desideratum of approximate reconstruction. 
When there are no errors, the success of an algorithm is binary: either it reconstructs the original sequence exactly or it does not. 
Here, however, an algorithm can have varying degrees of success based on the approximation factor $C$ that it achieves in~\eqref{eq:goal}. 
Even for those parameters $\left( \overline{L}, N \right)$ for which an algorithm with finite approximation factor $C$ exists, the best achievable $C$ might depend on $\left( \overline{L}, N \right)$.

With this in mind, our goal in this paper is to show that 
when $\overline{L}$ and $N$ are large enough, 
then there exists an approximate reconstruction algorithm with finite approximation factor.

\begin{theorem}[Approximate reconstruction from corrupted reads]\label{thm:main_eps}
  There exist constants $C$ and $\overline{C}$, depending only on $\left| \Sigma \right|$, such that 
  for every $\eps > 0$, 
  if $\overline{L} > \overline{C} / \eps$ and $N > N_{\cov} / \eps$, 
  then there exists an approximate reconstruction algorithm for error rate $\eps$ with approximation factor $C$. 
\end{theorem}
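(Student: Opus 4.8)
The plan is to reduce the corrupted-read problem to the error-free case of Theorem~\ref{thm:MBT13} by first \emph{cleaning} the reads: from the multiset $\wt{\mathcal R}$ we will produce a collection of strings that behave, for reconstruction purposes, like error-free reads of a slightly shorter effective length, and then run (a robust variant of) the Motahari--Bresler--Tse overlap-and-merge procedure. The key structural fact is that since $\overline{L} > \overline{C}/\eps$, even the ``shrunk'' normalized length $(1-O(\eps))\overline{L}$ is still a large constant, well above the repeat threshold $2/\ln|\Sigma|$; and since $N > N_{\cov}/\eps$ we have far more reads than needed merely to cover $X$, which is exactly the slack we will spend on alignment errors. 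So the first step is to record these two inequalities and fix the internal constants: pick an intermediate length $L' = (1-a\eps)L$ for a suitable absolute constant $a$, and note that with the given hypotheses $L' = \overline{L}'\ln n$ with $\overline{L}' > 2/\ln|\Sigma|$.

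Second, I would set up the \emph{anchoring / merging} step. For a true read $R_i = X[T_i, T_i+L-1]$, the corrupted read $\wt R_i$ has $\ed(R_i,\wt R_i)\le \eps L$, so any length-$L'$ substring of $\wt R_i$ is within edit distance $\eps L$ of the corresponding part of $X$. Given two corrupted reads $\wt R_i, \wt R_j$ whose true starting positions $T_i,T_j$ differ by a small amount, their true reads overlap in a window of length $\ge L - |T_i-T_j|$, and on that window the two corrupted strings are each $O(\eps L)$-close to $X$, hence $O(\eps L)$-close to each other in edit distance. The algorithm: greedily find pairs of corrupted reads with a long common (approximate) substring, treat that as evidence that the underlying true reads overlap, and stitch reads together in that order, at each junction outputting one of the two corrupted strings rather than trying to ``correct'' it. The crucial quantitative claim is that the resulting ordering of reads is (with probability $1-\delta$) the correct left-to-right order along $X$: this needs that no two genuinely disjoint windows of $X$ of length $\approx L'$ are within edit distance $O(\eps L)$ of each other. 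This is where $\overline{L}' > 2/\ln|\Sigma|$ enters — it is the edit-distance analogue of the ``no repeats'' statement, and should follow from a first-moment/union bound over the $\le n^2$ pairs of windows together with a counting bound on the number of strings within a given edit-distance ball (a ball of radius $r$ around a string of length $\ell$ over $\Sigma$ has size at most $(c|\Sigma|)^{r}\binom{\ell}{r}^{2}$ or so), which beats $|\Sigma|^{-\ell}$ precisely when $\ell \ln|\Sigma| > \Theta(r)$, i.e. when $\overline{L}$ exceeds an absolute constant times $1/\eps$.

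Third, assuming the order of the reads is correctly recovered, I would bound the edit distance of the assembled string $\wh X$ from $X$. Walk along $X$; the reads cover $X$ (a consequence of $N > N_{\cov}$), and between consecutive ``breakpoints'' where the algorithm switches from emitting (a piece of) $\wt R_i$ to emitting (a piece of) $\wt R_j$, the emitted content is a substring of some $\wt R_i$, which differs from the corresponding substring of $X$ by at most $\ed(R_i,\wt R_i)\le \eps L$ edits. Summing over the breakpoints: the number of reads actually used in the assembly is $O(n/L)$ (since each advances the ``frontier'' along $X$ by $\Omega(L)$ — here again we use that we only stitch at genuine, reasonably-long overlaps, so progress per read is $\Omega(\eps L)$ at worst, and a counting argument bounds the number of stitch points by $O(n/(\eps L))$), and each contributes at most $O(\eps L)$ edits, for a total of $O(n/(\eps L))\cdot O(\eps L) = O(n)$ — but we need $O(\eps n)$, so the accounting must be done more carefully: each read used contributes $\eps L$ \emph{edits} while it covers $\Omega(L)$ \emph{new positions} of $X$, giving an amortized edit rate of $O(\eps)$ per position of $X$ and hence $\ed(X,\wh X) = O(\eps n)$ with the desired absolute constant $C$.

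The main obstacle is the second step: proving that greedy overlap-matching on the corrupted reads recovers the correct read order with high probability. Unlike the Hamming/substitution-only setting, overlaps here must be detected up to edit distance, so ``long common substring'' must be replaced by ``long approximately-matching region,'' and one must rule out spurious long approximate matches between far-apart parts of a random $X$. Establishing the edit-distance repeat bound — that a uniformly random $X$ contains, w.h.p., no two disjoint length-$\Theta(\overline{L}\ln n)$ windows within edit distance $\Theta(\eps \overline{L}\ln n)$ — is the technical heart, and it is precisely this bound that forces the hypothesis $\overline{L} = \Omega(1/\eps)$; everything else is coverage bookkeeping via Lander--Waterman and the amortized edit-distance count of the previous paragraph.
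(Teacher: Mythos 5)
Your overall strategy---detect overlaps between corrupted reads via edit distance, rule out spurious matches by an edit-distance ``no approximate repeats'' property of a random $X$, stitch greedily, and amortize the per-read corruption over the new positions each read contributes---is the same as the paper's. But there is a genuine gap at exactly the point your own arithmetic flags. You first compute that if each stitch advances the frontier by only $\Omega(\eps L)$ then the total error is $O(n/(\eps L))\cdot O(\eps L)=O(n)$, which is useless, and you then resolve this by simply asserting that each read in fact covers $\Omega(L)$ new positions. Nothing in your algorithm guarantees that. ``Greedily find pairs with a long common approximate substring and stitch'' will happily chain together reads whose true starting points differ by $O(\eps L)$ (these have the \emph{longest} approximate common substrings, so a greedy rule favors them), in which case you really do use $\Theta(n/(\eps L))$ reads, each junction really does cost $\Theta(\eps L)$ edits (the $\eps L$ corruption of the newly appended read plus the $\Theta(\eps L)$ positional uncertainty of where the junction sits in $X$), and the bound degrades to $O(n)$. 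The $\Omega(L)$-progress claim is not bookkeeping; it is a constraint that must be built into the algorithm.

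The paper's fix, which your proposal is missing, is to force each extension to be near-maximal: it compares only a \emph{short} suffix window of the current read against a short prefix window of a candidate read, both of length $\alpha\eps L$ with $\alpha=4/c(\Sigma)$, and accepts when their edit distance is at most $(2+2c')\eps L$. A matching lemma (via the random-string overlap estimates plus the triangle inequality against the adversary's corruption) then pins the candidate's true start to $T+(1-\alpha\eps)L\pm O(\eps L)$, so every accepted extension advances by $(1-O(\eps))L$ while costing $O(\eps L)$ edits. This is also where the hypothesis $N>N_{\cov}/\eps$ is actually spent: it makes the gaps between consecutive true read starts at most $O(\eps L)$, which is what guarantees that a read starting $\approx(1-\alpha\eps)L$ ahead of the current one \emph{exists} and passes the test, so the while loop does not stall before reaching the end of $X$. (Your attribution of the surplus coverage to ``alignment errors'' is close but not the precise role.) Two smaller points: the opening framing as a reduction to Theorem~\ref{thm:MBT13} by ``cleaning'' the reads cannot be made literal---adversarially corrupted reads cannot be cleaned, and the paper makes no such reduction---and the operative threshold on $\overline{L}$ is not the repeat bound $2/\ln|\Sigma|$ but the requirement $\eps L=\Omega(\ln n)$ with a large enough constant, so that the edit-distance separation of length-$\Theta(\eps L)$ windows survives a union bound over all $O(n^2)$ pairs; you do arrive at this at the end of your second step, but it should replace, not supplement, the $2/\ln|\Sigma|$ discussion.
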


We show that a simple sequential algorithm achieves this result. 
Starting with an arbitrary read, 
at each step of the algorithm we find a read that overlaps with the current partially reconstructed sequence 
and extend the sequence using this read. 
If $\overline{L}$ and $N$ are large enough, 
then in each step of this process we extend the sequence by at least $cL$ for some positive constant $c$, 
while incurring an error in edit distance of at most $O(\eps) L$. 
Estimates on the edit distance between random strings with some overlap are crucially used in the analysis. 
The algorithm terminates when it has approximately reached both ends of the original sequence, 
and results in an estimate with the guarantee given by Theorem~\ref{thm:main_eps}.

Several variants of such an algorithm can be considered, 
and it can be shown using one of them that 
the dependence of $\overline{L}$ and $N$ on $\eps$ in Theorem~\ref{thm:main_eps} is not necessary. 
However, we decided to focus on one particular variant 
because it results in a small approximation factor (close to $3$) 
when $\eps$ is small enough, as stated in the following theorem.

\begin{theorem}[Approximate reconstruction from corrupted reads]\label{thm:main_eps_3}
  For every $C > 3$ 
  there exist constants $\overline{C} = \overline{C} \left( \Sigma \right)$, $\eps_0 = \eps_0 \left( \Sigma, C \right)$ and $C' = C' \left( \Sigma, C \right)$ 
  such that for every $\eps\in \left( 0, \eps_0 \right)$ 
  if $\overline{L} \geq \overline{C} / \eps$ and $N \geq C' N_{\cov} / \eps$, 
  then there exists an approximate reconstruction algorithm for error rate $\eps$ with approximation factor $C$. 
\end{theorem}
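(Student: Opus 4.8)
The plan is to analyze the greedy sequential algorithm sketched right after the theorem: start from one corrupted read, repeatedly splice on a read whose prefix matches the current suffix in edit distance, and stop near each end. Everything reduces to two high-probability structural facts — one about the random sequence $X$, one about the random read starts $\{T_i\}$ — followed by a per-step error analysis that keeps the accumulated edit distance at $(3+o(1))\eps n$.

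The only thing I would need about $X$ is the following ``edit distance between random strings'' estimate: there are constants $c_1 = c_1(\Sigma)$ and $\gamma = \gamma(\Sigma) > 0$ such that, with probability $\geq 1-\delta$, for every two starting positions $a \neq b$ and every length $\ell$ with $c_1 \eps L \leq \ell \leq L$ one has $\ed\left(X[a,a+\ell-1], X[b,b+\ell-1]\right) \geq \gamma \ell$. This follows because the expected edit distance of two independent uniform strings of length $\ell$ is linear in $\ell$ (e.g.\ their longest common subsequence has length at most a constant fraction of $\ell$), and since edit distance moves by at most $1$ when a single symbol is changed, a bounded-differences inequality gives a lower deviation of probability $e^{-\Omega(\ell)}$; the union bound over the $O(n^2 L)$ triples $(a,b,\ell)$ then goes through exactly when $c_1 \eps L \gtrsim \log n$, i.e.\ when $\overline{L} \geq \overline{C}/\eps$ with $\overline{C} = \overline{C}(\Sigma)$ large. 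The only thing I would need about the read locations is that, if $N \geq C' N_{\cov}/\eps$ with $C' = C'(\Sigma, C)$ large, then a Chernoff bound gives, with probability $\geq 1-\delta$, that every interval of $c_2 \eps L$ consecutive positions (for a small $c_2 = c_2(\Sigma, C)$) contains at least one read start — here the extra factor $1/\eps$ in the number of reads is exactly what makes the expected number of starts in such a short window grow like $\log n$.

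On these two events I would run the algorithm and maintain the invariant that after each step the current reconstruction $\wh{X}$ satisfies $\ed\bigl(\wh{X}, X[S_0, p]\bigr) = O(\eps L) \cdot (\#\text{steps so far})$ for the appropriate true endpoint $p$, while in addition the suffix of $\wh{X}$ of length comparable to $L$ stays within $O(\eps L)$ of the true suffix of $X$ ending at $p$. A rightward step looks over all reads for one admitting a cut point $k \geq c_1 \eps L$ with $\ed$ between $\wt{R}_j[1,k]$ and the length-$k$ suffix of $\wh{X}$ at most a threshold $\tau$ of order $\eps L$; it takes such a read, with the largest valid $k$ for that read and the smallest such $k$ across reads, and appends $\wt{R}_j[k+1, L]$ (the leftward phase is symmetric). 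Correctness of the gluing — that a read is never spliced at a wrong location — is where the first structural fact is used: if the true start $T_j$ of a spliced read did not put its overlap with $X[S_0, p]$ in the expected place, a length-$\geq c_1\eps L$ substring of $X$ would be within $\tau + O(\eps L)$ in edit distance of a differently placed length-$\geq c_1\eps L$ substring, contradicting the $\gamma \ell$ bound once $c_1$ is large enough relative to $\gamma$ and $\tau/(\eps L)$; this forces $|k - o| = o(\eps L)$, where $o$ is the true overlap. The second structural fact then guarantees that a read with true overlap in $[c_1\eps L, (c_1+c_2)\eps L]$ is always available until $p$ is within $O(\eps L)$ of $n$, so each step advances $p$ by at least $(1 - O(\eps))L$ and there are $(1 + O(\eps))n/L$ steps on each side.

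It remains to bound $\ed(X, \wh{X})$. Writing $\wh{X}$ as $\wt{R}_{j_0}$ with the trusted pieces $\wt{R}_{j_1}[k_1+1,L], \wt{R}_{j_2}[k_2+1,L], \dots$ spliced on, I would match these pieces against the true blocks of the form $X[S_{t-1}+L, S_t+L-1]$ and bound the cost block by block. Each block costs at most $(3 + o(1))\eps L$: one $\eps L$ for the corruption budget of the read being spliced in, and one $\eps L$ for each of the two boundaries of its trusted segment — the overlap end on the left and the read end on the right — where aligning $\wt{R}_j$ to $R_j$ can shift a boundary by up to $\eps L$; the cut-point discrepancy $|k - o|$ is lower order. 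Summing over $(1+O(\eps))n/L$ blocks on each side and adding the $o(\eps n)$ incurred near the two ends gives $\ed(X, \wh{X}) \leq (3 + O(\eps))\eps n$, which is at most $C\eps n$ once $\eps < \eps_0(\Sigma, C)$. The hard part, and the place I would be most careful, is this last step: one must show that the local alignment errors are \emph{additive} rather than compounding across steps — which is exactly why the ``clean suffix'' invariant must be carried through every gluing unchanged — and one must count honestly (three independent $\eps L$ contributions per block, and no more) in order to land at the factor $3$ promised in the statement rather than some larger absolute constant.
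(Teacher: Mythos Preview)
Your overall plan---greedy sequential splicing with a per-step edit-distance budget---matches the paper's. But your first structural fact is false as stated, and the error propagates into the part of the argument that is supposed to deliver the factor~$3$.

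You claim that with high probability $\ed(X[a,a+\ell-1], X[b,b+\ell-1]) \geq \gamma\ell$ for \emph{all} $a \neq b$ and all $\ell$ in the stated range. This fails whenever $|a-b|$ is small compared to $\ell$: two length-$\ell$ windows of $X$ shifted by $s$ share a common block of length $\ell - s$, so their edit distance is at most $2s$, which is far below $\gamma\ell$ when $s \ll \ell$. Your bounded-differences argument only gives concentration around the mean; for heavily overlapping windows the mean itself is $\Theta(s)$, not $\Theta(\ell)$. The correct statement, which the paper proves as its corollary combining the small- and large-overlap lemmas, is two-sided: for shifts $s \leq cL$ the edit distance is \emph{exactly} $2s$, while for larger shifts it is at least $2cL$.

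This matters directly for your localization step. With only the corrected two-sided fact, the most a successful prefix/suffix match at threshold $\tau = \Theta(\eps L)$ can tell you is that the true start of the spliced read lies within $O(\eps L)$ of where you expect---not $o(\eps L)$. So your claim that ``$|k - o| = o(\eps L)$'' and hence that ``the cut-point discrepancy is lower order'' is unjustified, and with it your $1{+}1{+}1$ per-block accounting. The paper reaches $3$ by a different bookkeeping: it fixes the overlap window at length $\alpha\eps L$ with $\alpha = 4/c(\Sigma)$, sets the matching threshold to $(2+2c')\eps L$ where $c' = 1/C'$, uses the two-sided corollary to pin the new read's start to within $(2+2c')\eps L$ of the target, and obtains a per-step cost of $\eps L + (2+2c')\eps L = (3+2c')\eps L$. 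Summing over at most $n/\bigl((1-(\alpha+2+2c')\eps)L\bigr)$ steps gives $\ed(X,\wh{X}) \leq \frac{3+2c'}{1-(\alpha+2+2c')\eps}\,\eps n + 2L$, and the statement for any $C>3$ follows by taking $C'$ large (so $c'$ small) and $\eps < \eps_0(\Sigma,C)$. If you want to salvage your variable-$k$ variant you would need to redo the accounting with $|k-o| = O(\eps L)$ rather than $o(\eps L)$; the answer can still be pushed to $3 + o(1)$, but not via the decomposition you sketched.
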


The closest results to Theorems~\ref{thm:main_eps} and~\ref{thm:main_eps_3} in the literature are those of~\cite{MoRaTsMa:13} and~\cite{ShCoTs:15}. 
In~\cite{MoRaTsMa:13} the authors consider i.i.d.\ noise affecting the reads of an i.i.d.\ sequence, 
and show that perfect layout (where all the reads are mapped correctly to their true locations) is possible even 
when the noise level is relatively high. 
The main reason for this positive result is that the independent noise assumption allows error correction of the reads by averaging; 
in the adversarial error model considered here such averaging is not always possible, hence the weaker goal of approximate reconstruction and the results of Theorems~\ref{thm:main_eps} and~\ref{thm:main_eps_3}.

In the recent follow-up work~\cite{ShCoTs:15}, the authors show positive results for a more realistic adversarial error model. 
They consider arbitrary sequences and give a bound on the read length, as a function of the repeat statistics of the sequence and the error rate, above which perfect assembly is possible. 
However, the model they consider simplifies several aspects of the problem. 
The authors mention several possible extensions as avenues for future work, two of which we consider in this paper:
more general errors, and a shotgun read model. 
For one, they specifically consider \emph{erasure errors}, where symbols in a read are erased, but the locations of the erasures are known. 
Furthermore they assume bounds not only on the number of erasures in a read, but also on the number of reads in which a given base is erased. 
This means that the reads contain information about the whole sequence. 
The error model we consider is much more adversarial, 
e.g., it can happen that even all the reads together contain no information at all about $\eps n$ bases of the sequence due to deletions. 
Also, they consider a dense-read model, where all reads of length $L$ of the original sequence are provided, 
therefore bypassing the question of coverage depth necessary for assembly. 
Here we instead consider the more realistic shotgun read model and provide a sufficient bridging condition for approximate reconstruction.

In summary, while our reconstruction results are weaker than previous ones, 
this is due to the much stricter adversarial error model we consider. 
Going forward, the main challenge is to bridge the gap between these models and results.

\section{Sequential reconstruction algorithm and its analysis} \label{sec:aofa} 

We first present some results on the edit distance between random strings in Section~\ref{sec:edit} 
that then allow us to present a simple sequential reconstruction algorithm in Section~\ref{sec:algo}. 
This algorithm is then analyzed and shown to have the desired performance in Section~\ref{sec:analysis}.

\subsection{Results on the edit distance between random strings} \label{sec:edit} 

Since the only information we have about the corrupted reads is that their edit distance from the actual reads is not too large (see~\eqref{eq:error_rate}), 
it is essential for any reconstruction algorithm to have a good understanding of and to make use of the edit distance between pairs of reads. 
Accordingly, we now present results on the edit distance between random strings with overlap; 
the proofs of these results are in Appendix~\ref{sec:edit_distance_proofs}.

Consider two random strings with overlap.  
Simulations show (see Figure~\ref{fig:edit}) 
that there is a phase transition in the edit distance between the two as a function of the overlap. 
If the overlap is above a certain threshold (which is linear in the length of the strings), 
then the edit distance is exactly twice the length of the overhang; 
if the overlap is below this threshold, 
then the edit distance between the two strings is as if they were completely independent. 
The results below rigorously verify certain aspects of this picture.

\begin{figure}[h!]
  \centering
  \includegraphics[width=0.45\linewidth]{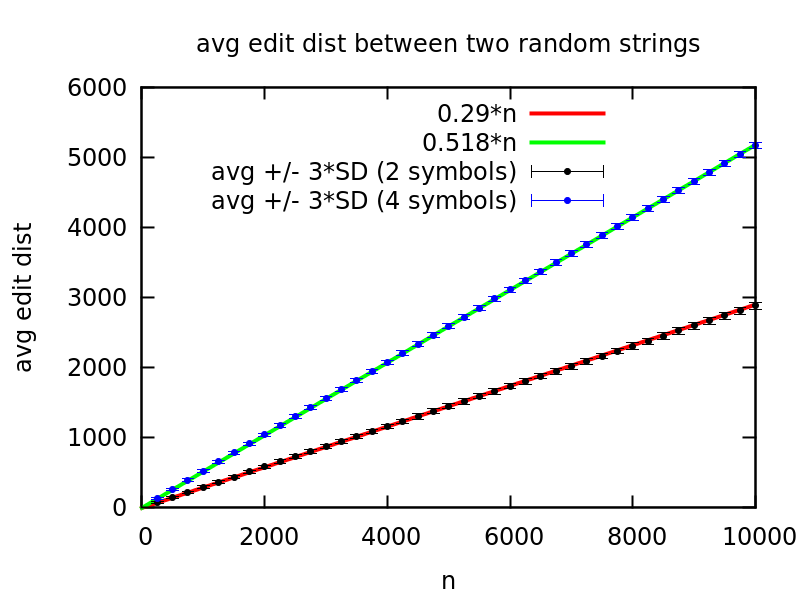}
  \qquad  
  \includegraphics[width=0.45\linewidth]{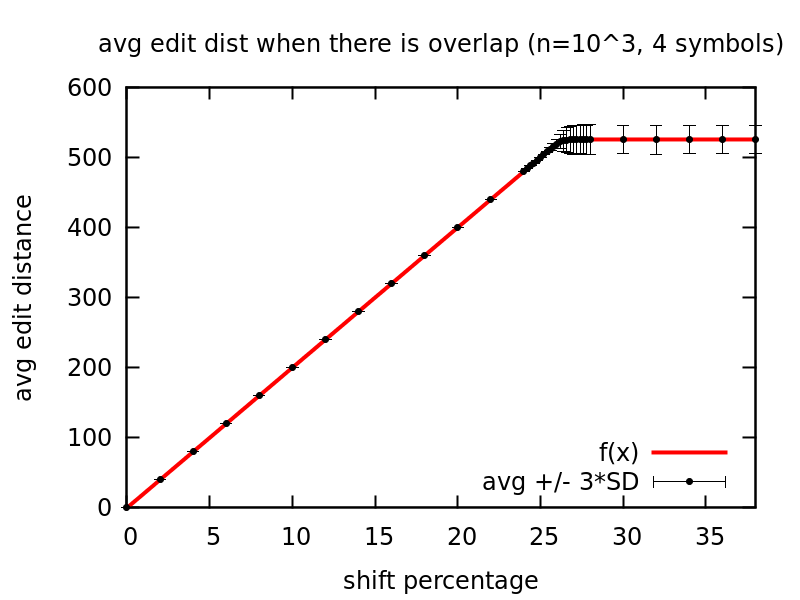}
  \caption{The left plot shows the empirical average edit distance between two independent strings of length $n$, where $n$ ranges from $250$ to $10^4$, and $\left|\Sigma\right|$ is $2$ (black) or $4$ (blue). 
  The lines $0.29 n$ (for $\left|\Sigma\right| = 2$) and $0.518 n$ (for $\left| \Sigma \right| = 4$) show a good fit to the data, though the limiting slope appears to be somewhat smaller than $0.29$ and $0.518$, respectively. 
  The right plot shows the empirical average edit distance between two strings of length $n = 10^3$ and with $\left| \Sigma \right| = 4$, where one string is a shift of the other, as in the setup of Lemma~\ref{lem:large_overlap}. 
  The function $f$ is piecewise linear with two pieces: 
  it is equal to twice the length of the shift when the shift percentage is less than $26.25\%$, 
  and it is equal to the constant $525$ ($52.5\%$ of the length $n = 10^3$) when the shift percentage is more than $26.25\%$.
  In both plots the average is over $10^3$ runs and the error bars show plus/minus $3$ times the empirical standard deviation.}
  \label{fig:edit}
\end{figure}

\begin{lemma}\label{lem:indpt}
  Let $X_m, Y_m \in \Sigma^m$ be two independent uniformly random strings. 
  There exists an absolute constant $c_{\ind} = c_{\ind} \left( \Sigma \right) > 0$ such that almost surely
  \[
   \lim_{m \to \infty} \frac{1}{m} \ed \left( X_m, Y_m \right) = c_{\ind}.
  \]
\end{lemma}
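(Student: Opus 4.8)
The plan is to prove convergence by combining subadditivity of the expectation with a concentration inequality, and then to establish strict positivity of the limit through a counting argument. I would couple the two strings on one probability space: let $(A_i,B_i)_{i\ge 1}$ be i.i.d., with $A_i$ and $B_i$ independent and uniform on $\Sigma$, and set $X_m=(A_1,\dots,A_m)$, $Y_m=(B_1,\dots,B_m)$, $D_m:=\ed(X_m,Y_m)$. Edit distance is subadditive under concatenation (an edit script for $X_{m+n}\to Y_{m+n}$ can be obtained by concatenating scripts for the first $m$ and the last $n$ coordinates), so $D_{m+n}\le \ed\big((A_1,\dots,A_m),(B_1,\dots,B_m)\big)+\ed\big((A_{m+1},\dots,A_{m+n}),(B_{m+1},\dots,B_{m+n})\big)$, where the two summands are independent and distributed as $D_m$ and $D_n$. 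In particular $a_m:=\E[D_m]$ satisfies $a_{m+n}\le a_m+a_n$, so by Fekete's lemma $a_m/m\to c_{\ind}:=\inf_m a_m/m$, and $0\le c_{\ind}<\infty$ because $0\le D_m\le m$.

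For the almost sure statement I would use bounded differences: $D_m$ is a function of the $2m$ independent coordinates $(A_i)_{i\le m}$, $(B_i)_{i\le m}$, and changing one coordinate changes $D_m$ by at most $1$ (by the triangle inequality $|\ed(X,Y)-\ed(X',Y)|\le\ed(X,X')$, with $X,X'$ differing in one position). McDiarmid's inequality then gives $\P(|D_m-a_m|\ge\eps m)\le 2\exp(-\eps^2 m)$ for every $\eps>0$, so Borel--Cantelli together with $a_m/m\to c_{\ind}$ yields $D_m/m\to c_{\ind}$ almost surely. (Alternatively, one could apply Kingman's subadditive ergodic theorem to the stationary ergodic array $W_{s,t}=\ed\big((A_{s+1},\dots,A_t),(B_{s+1},\dots,B_t)\big)$.)

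It remains to show $c_{\ind}>0$, which is the heart of the matter. For this I would upper bound the number $B_m(k)$ of strings within edit distance $k$ of a fixed string of length $m$: any such string is specified by an alignment that deletes or substitutes at most $k$ of the $m$ given positions and inserts at most $k$ new symbols, so a crude count gives $B_m(k)\le (k+1)\binom{m}{k}\binom{m+k}{k}(2|\Sigma|)^k$. With $k=\alpha m$ and $\binom{m}{\alpha m}\le 2^{mH(\alpha)}$, where $H(\alpha)=-\alpha\log_2\alpha-(1-\alpha)\log_2(1-\alpha)$, this is at most $2^{m(2H(\alpha)+\alpha\log_2(2|\Sigma|)+o(1))}$. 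Since $Y_m$ is uniform over the $|\Sigma|^m=2^{m\log_2|\Sigma|}$ strings and independent of $X_m$, a union bound over the edit ball around $X_m$ gives
\[
  \P\!\left(D_m\le\alpha m\right)\ \le\ \frac{B_m(\alpha m)}{|\Sigma|^m}\ \le\ 2^{-m\left(\log_2|\Sigma|-2H(\alpha)-\alpha\log_2(2|\Sigma|)-o(1)\right)}.
\]
As $\alpha\downarrow 0$ the exponent tends to $\log_2|\Sigma|>0$, so I can fix $\alpha_0=\alpha_0(\Sigma)>0$ for which the right-hand side tends to $0$; combined with the convergence $D_m/m\to c_{\ind}$ in probability, this forces $c_{\ind}\ge\alpha_0>0$. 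I expect the main obstacle to be making this counting bound sharp enough — in particular keeping the entropy factor $\binom{m}{\alpha m}=2^{mH(\alpha)+o(m)}$ instead of a wasteful $m^{\alpha m}$ — so that it truly beats $|\Sigma|^m$; everything else is routine.
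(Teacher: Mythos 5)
Your proof is correct and follows essentially the same route as the paper: subadditivity of edit distance under concatenation to get the limit (the paper invokes Kingman's theorem directly, which you note as an alternative to your Fekete-plus-McDiarmid-plus-Borel--Cantelli argument), and a volume bound on edit balls combined with a union bound over the uniform $Y_m$ to get strict positivity. The only substantive difference is that the paper performs a more careful count --- separating deletions, insertions, and substitutions --- to extract the explicit constant $\delta^{**} \approx 0.338$ for $\left| \Sigma \right| = 4$, which is reused later in Lemma~\ref{lem:small_overlap}, whereas your cruder ball estimate establishes positivity only.
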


Determining the value of the limiting constant $c_{\ind}$ is a challenging open problem. 
When $\left| \Sigma \right| = 4$, as in the case of DNA sequencing, 
simulations suggest that $c_{\ind} \approx 0.51$, 
while we show a simple lower bound of $c_{\ind} > 0.338$.

\begin{lemma}\label{lem:small_overlap}
  Let $X \in \Sigma^{2m}$ be a uniformly random string. 
  For every $d \in \left( 0, 1 \right)$ there exist  positive constants $\gamma = \gamma \left( d, \Sigma \right)$ and $c' = c' \left( \Sigma \right)$ such that 
  \[
    \ed \left( X \left[ 1 , m \right], X \left[k + 1, k + m  \right] \right) \geq \gamma m
  \]
  for all $k \ge d m$ with probability at least $1 - e^{-c' d m}$. 
\end{lemma}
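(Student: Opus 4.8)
The plan is to prove Lemma~\ref{lem:small_overlap} by a union bound over all alignments (parsings) that could witness a small edit distance, showing that each such alignment is exponentially unlikely, and that the number of alignments is only exponential with a smaller rate. Fix $k \ge dm$ and write $A = X[1,m]$ and $B = X[k+1,k+m]$. Suppose for contradiction that $\ed(A,B) \le \gamma m$. An optimal sequence of at most $\gamma m$ edit operations corresponds to an alignment: a monotone matching between positions of $A$ and positions of $B$ such that matched pairs agree in value, and the number of unmatched positions on each side plus the number of mismatched-but-aligned pairs is at most $\gamma m$. In particular, at least $m - \gamma m$ positions of $A$ are matched to equal-valued positions of $B$.

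First I would encode such an alignment combinatorially: it is specified by the set $S \subseteq [m]$ of matched positions in $A$ (with $|S| \ge (1-\gamma)m$) together with the monotone injection $\phi : S \to [m]$ giving the matched positions in $B$. The key structural point is that because the overlap is limited — $B$ is the window of $X$ shifted by $k \ge dm$ — a matched pair $(i, \phi(i))$ with $i \in S$ forces the equality $X_i = X_{k + \phi(i)}$, and crucially, for most $i$ this is an equality between two \emph{distinct, hence independent} coordinates of $X$. The only way $X_i$ and $X_{k+\phi(i)}$ can be literally the same coordinate is if $i = k + \phi(i)$, i.e. $\phi(i) = i - k$; since $\phi$ is monotone and injective, the set of such $i$ is confined to a narrow band and has size at most something like $m - k \le (1-d)m$ — actually one must be a bit careful, but monotonicity of $\phi$ together with $\phi(i) \ge 1$ forces $i \le$ (number of elements of $S$ below $i$) $+ k$-type bounds, so the "diagonal" $\phi(i) = i-k$ can hold on at most $O(m-k)$ values; more simply, among the $\ge (1-\gamma)m$ matched positions, at most $m - k$ can lie on this diagonal. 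Hence at least $(1-\gamma)m - (m-k) \ge (1-\gamma)m - (1-d)m = (d - \gamma) m$ matched pairs are genuine independent-coordinate equalities. Each such equality holds with probability $1/|\Sigma|$ independently (since all the coordinates involved are distinct), so a \emph{fixed} alignment with this many off-diagonal matches has probability at most $|\Sigma|^{-(d-\gamma)m}$.

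Next I would bound the number of alignments: choosing $S$ costs at most $2^m$ and choosing the monotone injection $\phi : S \to [m]$ costs at most $\binom{m}{|S|} \le 2^m$, so the total number of relevant alignments is at most $4^m = e^{(\ln 4) m}$. By a union bound, the probability that \emph{some} alignment witnesses $\ed(A,B) \le \gamma m$ is at most $e^{(\ln 4) m} \cdot |\Sigma|^{-(d-\gamma)m} = \exp\!\big( (\ln 4 - (d-\gamma)\ln|\Sigma|) m \big)$. Now choose $\gamma = \gamma(d,\Sigma)$ small enough that $(d - \gamma)\ln|\Sigma| > \ln 4$, say $\gamma = d/2$ together with the requirement that $d \ln|\Sigma| > 2\ln 4$ — and if $d$ is too small for that, shrink $\gamma$ further and absorb the loss into the constant $c'$ in the exponent (the statement only asks for a bound $1 - e^{-c'dm}$, and the rate we get is proportional to $d$ once $\gamma$ is taken proportional to $d$, at least in the regime where the $\ln 4$ term is dominated; for the small-$d$ regime one simply notes $d - \gamma \ge d/2$ and that the combinatorial entropy $\ln 4$ must be compared against $\tfrac{d}{2}\ln|\Sigma|$, which may require the cruder bound and a constant $c'$ depending on $\Sigma$). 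This yields the claimed bound $e^{-c'dm}$.

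The main obstacle I anticipate is getting the off-diagonal count right: one needs to argue carefully that a monotone injection $\phi$ cannot have too many fixed points of the shifted diagonal $\phi(i) = i - k$ \emph{while simultaneously} matching nearly all of $[m]$. A clean way is: if $\phi(i) = i-k$ and $\phi(i') = i'-k$ for $i < i'$ both in $S$, then the number of elements of $S$ strictly between them that are \emph{not} on the diagonal is controlled by monotonicity, but more usefully, the diagonal points of $S$ form an increasing sequence in both coordinates with matched slope $1$ and a consistent offset $-k$, so their $B$-coordinates $i - k$ must lie in $[1,m]$, forcing $i \le m + k$ trivially and $i \ge k+1$, i.e. all diagonal points have $A$-coordinate in $\{k+1, \dots, m\}$ — of which there are at most $m - k \le (1-d)m$. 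That is exactly the bound used above. A secondary technical point is ensuring that "edit distance $\le \gamma m$" really does produce an alignment with $\ge (1-\gamma)m$ matched equal pairs (each insertion/deletion removes at most one position from the matched set on one side, each substitution turns one matched-equal pair into a matched-unequal pair), which is a standard fact about the edit-distance-as-alignment correspondence and I would just cite or state it. Everything else is routine: a union bound, a choice of $\gamma$, and folding constants into $c'$.
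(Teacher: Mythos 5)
Your overall strategy (a union bound over alignments, counting only the off-diagonal matched pairs as genuine constraints) is sound in outline, and your combinatorial observation that at most $m-k$ matched pairs can lie on the diagonal $\phi(i)=i-k$ is correct. The problem is quantitative, and as written it is fatal: you bound the number of alignments by $4^m$ and the probability of a fixed alignment by $\left|\Sigma\right|^{-(d-\gamma)m}$, so your union bound requires $(d-\gamma)\ln\left|\Sigma\right|>\ln 4$. For $\left|\Sigma\right|=4$ this forces $d>1$, and for $\left|\Sigma\right|=2$ it forces $d>2$; that is, for the alphabet sizes of interest the bound never closes for any $d\in(0,1)$. Your proposed patch of ``shrinking $\gamma$ further'' cannot rescue this, because the deficit sits in the constant entropy term $\ln 4$, which does not depend on $\gamma$: even at $\gamma=0$ the gain is only $d\ln\left|\Sigma\right|<\ln 4$. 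The fix is to count alignments with the entropic bound you already wrote down before discarding it: $S$ has size at least $(1-\gamma)m$, so choosing $S$ costs roughly $\binom{m}{\gamma m}\approx 2^{H(\gamma)m}$, and the monotone injection $\phi$ is determined by its image, whose complement in $[m]$ has size at most $\gamma m$, costing another $\binom{m}{\gamma m}$. The total entropy is then $2H(\gamma)m\ln 2$, which tends to $0$ as $\gamma\to 0$, and the union bound closes once $2H(\gamma)\ln 2<(d-\gamma)\ln\left|\Sigma\right|$, giving $\gamma$ proportional to $d$ as the statement requires. (You also never perform the union bound over $k\in[dm,m]$, but that is only a polynomial factor.)

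A secondary imprecision: the claim that the off-diagonal equalities $X_i=X_{k+\phi(i)}$ hold independently ``since all the coordinates involved are distinct'' is not justified as stated --- distinctness of the two coordinates within one equality does not give independence across equalities, since different equalities can share a coordinate in the overlap region $[k+1,m]$. The conclusion $\left|\Sigma\right|^{-E}$ is nevertheless correct, because the constraint graph has maximum degree two and monotonicity of $\phi$ rules out cycles among off-diagonal edges, so the graph is a union of paths; but this needs to be argued. Finally, note that the paper's proof sidesteps all of this: by the prefix-monotonicity~\eqref{eq:ed_mon}, $\ed(X[1,m],X[k+1,k+m])\geq\ed(X[1,dm],X[k+1,k+dm])$, and since $k\geq dm$ the two truncated strings are genuinely independent, so the volume bound already established in the proof of Lemma~\ref{lem:indpt} applies verbatim. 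That route avoids both the diagonal bookkeeping and the independence subtlety, at the cost of only keeping the constraints from the first $dm$ coordinates --- which is essentially the same $\gamma\propto d$ loss your argument incurs anyway.
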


In particular, when $\left| \Sigma \right| = 4$, then we can take 
$\gamma = 0.338 \times d$. 

\begin{lemma}\label{lem:large_overlap}
  Let $X \in \Sigma^{2m}$ be a uniformly random string. 
  There exist positive constants $c = c \left( \Sigma \right)$ and $c' = c' \left( \Sigma \right)$ such that 
  \[
    \ed \left( X \left[ 1 , m \right], X \left[ 1 + k , m + k \right] \right) = 2k
  \]
  for all $k \leq c m$ with probability at least $1 - e^{-c' m}$. 
\end{lemma}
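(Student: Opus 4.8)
The plan is to prove the two bounds $\ed\bigl(X[1,m],X[1+k,m+k]\bigr)\le 2k$ and $\ed\bigl(X[1,m],X[1+k,m+k]\bigr)\ge 2k$ separately; the first holds for \emph{every} realization of $X$, and only the second needs the exponential probability. For the upper bound, delete the first $k$ symbols $X_1,\dots,X_k$ of $X[1,m]$, obtaining $X[k+1,m]$ in $k$ operations, and then append the $k$ symbols $X_{m+1},\dots,X_{m+k}$, obtaining $X[k+1,m+k]=X[1+k,m+k]$ in another $k$ operations. So everything comes down to the matching lower bound.

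For the lower bound I would use the alignment (Needleman--Wunsch) description of edit distance. Writing $A=X[1,m]$ and $B=X[1+k,m+k]$, the $x$-th symbol of $A$ is $X_x$ and the $y$-th symbol of $B$ is $X_{y+k}$, and $\ed(A,B)$ is the minimum, over monotone lattice paths from $(0,0)$ to $(m,m)$, of (number of horizontal steps, i.e.\ deletions) $+$ (number of vertical steps, i.e.\ insertions) $+$ (number of diagonal steps pairing two unequal symbols, i.e.\ substitutions). A path starting and ending on the main diagonal has equally many, say $d$, horizontal and vertical steps, so a path of cost $<2k$ must have $d\le k-1$ and fewer than $2k-2d$ mismatched diagonal steps among its $m-d$ diagonal steps. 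Encoding a diagonal step reaching $(x,y)$ as the pair of positions $\{x,\,y+k\}$ of $X$, I would conclude that $\{\ed(A,B)<2k\}$ is contained in the union, over all such lattice paths $P$ and all size-$(m+d-2k+1)$ sets $S$ of diagonal steps of $P$, of the event that $X_x=X_{y+k}$ holds for every diagonal step in $S$ reaching $(x,y)$.

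The heart of the matter is that the equality constraints coming from a fixed path are ``as independent as possible'': placing an edge $\{x,\,y+k\}$ on the positions of $X$ for every diagonal step reaching $(x,y)$ yields a \emph{forest}. Indeed, distinct diagonal steps of a path have distinct $x$-coordinates, and the diagonal $x-y$ at any point of the path is at most the number of horizontal steps used so far, hence at most $d\le k-1<k$, so $x<y+k$ and $x$ is the smaller endpoint of its edge; thus every vertex is the smaller endpoint of at most one edge, and such a graph cannot contain a cycle, since the minimum vertex of a putative cycle would be the smaller endpoint of both cycle-edges meeting it. As every subgraph of a forest is a forest, the event above asks that all $|S|$ equality constraints of a forest with $|S|=m+d-2k+1\ge m-2k+1$ edges hold, which, since $X$ is i.i.d.\ uniform, has probability exactly $|\Sigma|^{-|S|}\le|\Sigma|^{-(m-2k+1)}$.

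It then remains to count configurations and close the union bound. The number of lattice paths with $2d$ non-diagonal steps is at most $\tfrac{(m+d)!}{d!\,d!\,(m-d)!}\le (Cm/d)^{2d}$, and for each such path the number of admissible choices of $S$ is at most $(em/k)^{2k}$; summing over $0\le d\le k-1$ gives at most $\exp\!\bigl(O(k\ln(m/k))\bigr)$ configurations, with an absolute implied constant. Hence $\P\bigl(\ed(A,B)<2k\bigr)\le \exp\!\bigl(O(k\ln(m/k))\bigr)\cdot|\Sigma|^{-(m-2k+1)}$, and choosing $c=c(\Sigma)$ small enough that $k\le cm$ makes $O(k\ln(m/k))\le\tfrac12(m-2k)\ln|\Sigma|$, so the right-hand side is at most $e^{-c'm}$ for a suitable $c'=c'(\Sigma)>0$. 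I expect the main obstacle to be the forest claim and the attendant bookkeeping: one has to make sure the ``one edge per smaller endpoint'' property genuinely holds --- which is precisely where the hypothesis $k\le cm$ is used, through the bound $d\le k-1$ on how far the optimal alignment can stray from the main diagonal --- and that the crude overcount of lattice paths is nonetheless beaten by the $|\Sigma|^{-\Theta(m)}$ factor, which again forces $k$ to be a small enough constant multiple of $m$. The rest is routine.
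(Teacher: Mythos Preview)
Your argument is correct and takes a genuinely different route from the paper's. The paper first proves an LCS analogue (if $\LCS(X[1,m],X[1+k,m+k])>m-k$ then every matched pair must be ``nontrivial,'' i.e., the $\ell$-th coordinate of $A$ is never matched to the $(\ell-k)$-th of $B$), counts degrees of freedom in $X$ under such a nontrivial noncrossing matching to get a $|\Sigma|^{2\ell-m}$ bound, and then lifts this to edit distance by first peeling off the substitutions (paying an extra $\binom{m}{2k}|\Sigma|^{k}$), arriving at $k\binom{m}{2k}\binom{m}{k}^2|\Sigma|^{3k-m}$.

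Your approach bypasses the LCS detour entirely: you work directly with alignment paths, observe that cost $<2k$ forces $d\le k-1$, and then prove that the equality constraints $\{X_x=X_{y+k}\}$ over the diagonal steps form a \emph{forest} on the positions of $X$ (via the clean ``each vertex is the smaller endpoint of at most one edge'' argument). This gives the probability bound $|\Sigma|^{-|S|}$ in one stroke. Two concrete payoffs: (i) your final exponent $|\Sigma|^{-(m-2k)}$ is sharper than the paper's $|\Sigma|^{3k-m}$, because you do not lose a factor by separating out substitutions; (ii) the forest observation is a single structural fact that replaces the paper's two-step ``nontrivial matching $+$ chained determinations'' counting. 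On the other hand, the paper's route yields the LCS statement as an independent lemma of some interest, and its ``nontrivial matching'' observation is morally equivalent to your $x<y+k$ inequality (both say an optimal alignment never uses the identity pairing), just exploited differently. One small omission: you should mention the final union bound over $k\le cm$ to get the ``for all $k$'' conclusion, but this is indeed routine.
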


We denote by $\kappa_{\ed} = \kappa_{\ed} \left( \Sigma \right)$ the supremum of all constants $c = c \left( \Sigma \right)$ for which Lemma~\ref{lem:large_overlap} holds (with some $c' = c' \left( \Sigma \right)$). 
Figure~\ref{fig:edit} suggests that $\kappa_{\ed} = c_{\ind} / 2$.  
We show that for $\left| \Sigma \right| = 4$, $\kappa_{\ed} > 0.0846$.

As a corollary of the lemmas we obtain the following result. 
\begin{corollary}\label{cor:shift} 
  Let $X \in \Sigma^n$ be uniformly random and let $L = \overline{L} \ln \left( n \right)$. 
  If the constant $\overline{L} = \overline{L} \left( \Sigma \right)$ is large enough, 
  then there exists a positive constant $c = c \left( \Sigma \right)$ such that 
  with probability going to $1$ as $n \to \infty$ 
  the following holds for all $i,j \in \left[ n - L + 1 \right]$: 
  \begin{enumerate}[(a)]
   \item if $\left| i - j \right| \leq c L$, then $\ed \left( X \left[ i, i + L - 1 \right], X \left[ j, j + L - 1 \right] \right) = 2 \left| i - j \right|$;
   \item otherwise $\ed \left( X \left[ i, i + L - 1 \right], X \left[ j, j + L - 1 \right] \right) \geq 2 c L$.
  \end{enumerate}
\end{corollary}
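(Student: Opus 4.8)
The plan is to reduce the two-substring comparison for each pair $i,j$ to a comparison inside a single uniformly random word of length $2L$, apply Lemmas~\ref{lem:large_overlap} and~\ref{lem:small_overlap} in three overlap regimes, and then union bound over all $O(n^2)$ pairs. Fix $i \le j$ in $[n-L+1]$ (the case $i>j$ is symmetric) and set $k = j-i$. If $k \le L$, then $X[i,i+L-1]$ and $X[j,j+L-1]$ are determined by the $k+L \le 2L$ i.i.d.\ coordinates $X_i,\dots,X_{j+L-1}$, so the pair $\bigl(X[i,i+L-1],X[j,j+L-1]\bigr)$ has the same law as $\bigl(Y[1,L],Y[1+k,L+k]\bigr)$ for $Y$ uniform in $\Sigma^{2L}$; if $k>L$ the two substrings depend on disjoint coordinate blocks, hence are independent and uniform on $\Sigma^L$, so the pair has the same law as $\bigl(Y[1,L],Y[L+1,2L]\bigr)$. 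In either case the events of interest are controlled by Lemmas~\ref{lem:large_overlap} and~\ref{lem:small_overlap} with $m=L$. (Lemma~\ref{lem:indpt} is not needed: the fully disjoint case is already an instance of Lemma~\ref{lem:small_overlap} with shift exactly $L$.)

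Choose the constants as follows. Let $c_* = c_*(\Sigma)$ be any value with $0 < c_* < \kappa_{\ed}(\Sigma)$, so Lemma~\ref{lem:large_overlap} applies with constant $c_*$, and put $c := \min\bigl\{ c_*,\ \tfrac12\gamma(c_*,\Sigma),\ \tfrac12\gamma(\tfrac12,\Sigma) \bigr\} > 0$, where $\gamma(\cdot,\cdot)$ is the constant from Lemma~\ref{lem:small_overlap}. Now distinguish three regimes for $k$. (i) If $k \le c_* L$, then Lemma~\ref{lem:large_overlap} applied to $Y$ gives $\ed\bigl(X[i,i+L-1],X[j,j+L-1]\bigr) = 2k$ outside an event of probability at most $e^{-c'L}$; when $k \le cL$ this is exactly conclusion~(a), and when $cL < k \le c_*L$ it gives $\ed = 2k > 2cL$. (ii) If $c_* L < k \le L$, then Lemma~\ref{lem:small_overlap} with $d = c_*$ gives $\ed \ge \gamma(c_*,\Sigma) L \ge 2cL$ outside an event of probability at most $e^{-c' c_* L}$. (iii) If $k > L$, then applying Lemma~\ref{lem:small_overlap} with $d = \tfrac12$ to $\bigl(Y[1,L],Y[L+1,2L]\bigr)$, whose shift $L$ lies in $[L/2,L]$, gives $\ed \ge \gamma(\tfrac12,\Sigma) L \ge 2cL$ outside an event of probability at most $e^{-c' L/2}$.

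Let $c'' = c''(\Sigma) > 0$ be small enough that all three per-pair failure probabilities above are at most $e^{-c'' L}$. Since $L = \overline{L}\ln n$, a union bound over the at most $n^2$ ordered pairs $(i,j)$ shows that conclusions~(a) and~(b) hold simultaneously for all $i,j\in[n-L+1]$ outside an event of probability at most $n^2 e^{-c'' \overline{L}\ln n} = n^{2 - c''\overline{L}}$, which tends to $0$ as $n\to\infty$ as soon as $\overline{L} > 2/c''$. This is precisely the requirement that $\overline{L}=\overline{L}(\Sigma)$ be large enough in the statement of the corollary.

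The step I expect to be the main obstacle is calibrating the constant $c$ against the lower bounds of Lemmas~\ref{lem:large_overlap} and~\ref{lem:small_overlap} in regimes~(ii) and~(iii). A naive attempt would apply Lemma~\ref{lem:small_overlap} with $d$ equal to the threshold $c$ of regime~(i), but because $\gamma(d,\Sigma)$ degrades linearly as $d\to 0$ (indeed $\gamma = 0.338\,d$ when $|\Sigma|=4$), this would demand $2c \le \gamma(c,\Sigma)$, which holds for no $c>0$. The resolution is to invoke Lemma~\ref{lem:small_overlap} with a \emph{fixed} parameter ($c_*$ in regime~(ii), $\tfrac12$ in regime~(iii)) and then take $c$ small enough to absorb the resulting fixed constant, as above. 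A secondary point is that Lemma~\ref{lem:small_overlap} literally concerns shifts of size at most $m$ within a word of length $2m$, so genuinely disjoint substrings ($k>L$) are not directly covered and must be brought into its scope via the independence of disjoint coordinate blocks.
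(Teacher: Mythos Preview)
Your proof is correct and follows exactly the paper's approach: apply Lemmas~\ref{lem:small_overlap} and~\ref{lem:large_overlap} to each pair and take a union bound over the $O(n^2)$ pairs, choosing $\overline{L}$ large enough that the per-pair failure probabilities are $o(n^{-2})$. The paper's own proof is a two-sentence sketch that leaves the constant calibration and the disjoint case $k>L$ implicit; your handling of both (invoking Lemma~\ref{lem:small_overlap} with a \emph{fixed} $d$ rather than with $d=c$, and embedding disjoint blocks as a shift of size $L$) is sound and fills in precisely those gaps.
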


\begin{proof}
This follows directly from Lemmas~\ref{lem:small_overlap} and~\ref{lem:large_overlap}, 
together with a union bound over all possible pairs $i, j \in \left[ n - L + 1 \right]$. 
The constant $\overline{L}$ needs to be chosen large enough so that the error probabilities in Lemmas~\ref{lem:small_overlap} and~\ref{lem:large_overlap} are $o \left( n^{-2} \right)$. 
\end{proof}

\subsection{Sequential reconstruction algorithm} \label{sec:algo} 

We present now a simple sequential approximate reconstruction algorithm. 
The algorithm takes as a parameter the number of reads; 
this guarantees a certain amount of coverage. 
Let $N = C' N_{\cov} / \eps$ and $c' = 1/C'$, where we assume that $C' \geq 1$; 
standard results~\cite{LanderWaterman:88} imply that then with probability at least $1 - \delta / 2$ 
there is no gap greater than $1.1\times c' \eps L$ in between subsequent starting points of the (yet uncorrupted) reads. 
We fix $\alpha = 4 / c \left( \Sigma \right)$, where $c \left( \Sigma \right)$ is given by Corollary~\ref{cor:shift}. 
For the purposes of Theorems~\ref{thm:main_eps} and~\ref{thm:main_eps_3} we may and will assume that $\eps$ is small enough; 
in particular we assume that $\eps \leq 1 / \left( 3 \alpha \right)$.

Before specifying the algorithm we introduce further notation. 
Let negative integers denote counting coordinates from the opposite end of a sequence, 
e.g., for $x \in \Sigma^m$, $x \left[ - k, - 1 \right]$ denotes the suffix of $x$ of length $k$. 
Furthermore let $I : \wt{\mathcal{R}} \to \left[ N \right]$ denote the map that takes a corrupted read to its index, i.e., $I \left( \wt{R}_i \right) = i$.

The algorithm is as follows:
\begin{algorithmic}[1]
\State Let $Y = \wt{R}_1$ and set $k = 1$.
\While{there exists $\wt{R} \in \wt{\mathcal{R}}$ such that 
  $\ed \left( \wt{R}_k \left[ - \alpha \eps L, - 1 \right], \wt{R} \left[ 1, \alpha \eps L \right] \right) \leq \left( 2 + 2c' \right) \eps L$}\label{alg:right}
\State choose any such $\wt{R} \in \wt{\mathcal{R}}$;
\State $Y \gets$ the concatenation of $Y$ and $\wt{R} \left[ 1 + \alpha \eps L, - 1 \right]$;
\State $k \gets I \left( \wt{R} \right)$.
\EndWhile
\State Set $k = 1$. 
\While{there exists $\wt{R} \in \wt{\mathcal{R}}$ such that 
  $\ed \left( \wt{R} \left[ - \alpha \eps L, - 1 \right], \wt{R}_k \left[ 1, \alpha \eps L \right] \right) \leq \left( 2 + 2c' \right) \eps L$}\label{alg:left}
\State choose any such $\wt{R} \in \wt{\mathcal{R}}$;
\State $Y \gets$ the concatenation of $\wt{R} \left[ 1, - \alpha \eps L - 1 \right]$ and $Y$;
\State $k \gets I \left( \wt{R} \right)$.
\EndWhile
\end{algorithmic}

In words: we take an arbitrary read, extend it to the right until we possibly can (first while loop), 
and then extend it to the left until we can (second while loop).

\subsection{Analysis of the sequential algorithm} \label{sec:analysis} 

We now analyze the algorithm presented above and as a consequence prove our results:  
Theorem~\ref{thm:main_eps} follows by taking $C' = 1$, 
while Theorem~\ref{thm:main_eps_3} follows by taking $C'$ large enough.

We first recall a fact that follows directly from the dynamic programming algorithm for computing the edit distance. 
For any $m$, sequences $x, y \in \Sigma^m$, and $i,j < m$, we have
\begin{equation}\label{eq:ed_mon}
  \ed \left( x \left[ 1, i \right], y \left[ 1, j \right] \right) 
  \leq 
  \ed \left( x \left[ 1, i + 1 \right], y \left[ 1, j + 1 \right] \right).
\end{equation}
In words: deleting a coordinate from the end of both $x$ and $y$ cannot increase their edit distance.

The first thing we have to understand is the set of corrupted reads that satisfy the conditions of the while loops in 
the algorithm; 
we focus on the first while loop as the second one is analogous. 
The following lemma says that if two corrupted reads are such that the length $\alpha \eps L$ prefix of one and suffix of the other are close in edit distance, 
then the starting points of these reads are approximately $\left( 1 - \alpha \eps \right) L$ apart.

\begin{lemma}\label{lem:close_reads}
  Let $X \in \Sigma^n$ be a uniformly random string, 
  let $L = \left( \overline{C} / \eps \right) \ln \left( n \right)$, 
  and let $\wt{R}_1, \wt{R}_2 \in \wt{\mathcal{R}}$ be two corrupted reads of $X$ of length $L$.
  Suppose that 
  $\ed \left( \wt{R}_1 \left[ - \alpha \eps L, -1 \right], \wt{R}_2 \left[ 1, \alpha \eps L \right] \right) \leq \left( 2 + 2 c' \right) \eps L$. 
  If $\overline{C}$ is large enough, then with probability $1 - o \left( n^{-2} \right)$ we have that 
  \begin{equation}\label{eq:shift}
    T_2 \in \left[ T_1 + \left( 1 - \alpha \eps \right) L - \left( 2 + 2 c' \right) \eps L, T_1 + \left( 1 - \alpha \eps \right) L + \left( 2 + 2 c' \right) \eps L \right].
  \end{equation}
\end{lemma}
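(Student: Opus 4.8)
The plan is to pass from the corrupted reads to genuine substrings of $X$ using the optimal alignments realizing the corruptions, and then to invoke Corollary~\ref{cor:shift} (equivalently Lemmas~\ref{lem:small_overlap} and~\ref{lem:large_overlap}) at the scale $\alpha \eps L = \alpha \overline{C} \ln(n)$, which tends to infinity. Throughout we condition on $T_1, T_2$ (independent of $X$) and take the probability over $X$ and the adversarial choice of corruption; it suffices to show that if $T_2$ lies \emph{outside} the interval in~\eqref{eq:shift} then with probability $1 - o(n^{-2})$ the hypothesis $\ed \left( \wt{R}_1 \left[ - \alpha \eps L, -1 \right], \wt{R}_2 \left[ 1, \alpha \eps L \right] \right) \leq \left( 2 + 2 c' \right) \eps L$ fails.

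First I would reduce to a statement purely about $X$. Fix an optimal alignment achieving $\ed(R_1, \wt{R}_1) \le \eps L$ and restrict it to the suffix $S_1 := \wt{R}_1[-\alpha\eps L, -1]$: this exhibits a suffix $R_1[p, L] = X[T_1 + p - 1, T_1 + L - 1]$ of $R_1$ with $\ed(S_1, X[I]) \le \eps L$, where $I := [T_1 + p - 1, T_1 + L - 1]$ has its right endpoint pinned at $T_1 + L - 1$ and length $|I| \in [\alpha\eps L - \eps L, \alpha\eps L + \eps L]$ (since the length gap of two strings is at most their edit distance). Symmetrically, $S_2 := \wt{R}_2[1, \alpha\eps L]$ satisfies $\ed(S_2, X[J]) \le \eps L$ for some interval $J = [T_2, T_2 + q - 1]$ with left endpoint pinned at $T_2$ and $|J| \in [\alpha\eps L - \eps L, \alpha\eps L + \eps L]$. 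By the triangle inequality for edit distance together with the hypothesis, $\ed(X[I], X[J]) \le \eps L + (2 + 2c')\eps L + \eps L = (4 + 2c')\eps L$. This absorbs the adversary: whatever the corruptions are, there is a pair $(I, J)$ of the above restricted form with $\ed(X[I], X[J]) \le (4 + 2c')\eps L$, and once $T_1, T_2$ and the two lengths are fixed the intervals are determined, so there are only $O((\eps L)^2) = O(\poly\log n)$ candidate pairs.

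Next I would establish rigidity of such a pair. Reducing to equal length by replacing $X[J]$ with the window $X[J']$ of length $|I|$ starting at $T_2$ (which costs at most $\big||I| - |J|\big| = O(\eps L)$ in edit distance by~\eqref{eq:ed_mon}, since $X[J]$ and $X[J']$ share a common prefix), I obtain two equal-length windows of $X$ shifted by $t$, where $t = T_2 - (T_1 + L - |I|) = \big(T_2 - T_1 - (1-\alpha\eps)L\big) + \big(|I| - \alpha\eps L\big)$, so $|t|$ and $\big|T_2 - T_1 - (1-\alpha\eps)L\big|$ differ by at most $\eps L$. Now apply Corollary~\ref{cor:shift} at scale $|I| \approx \alpha\eps L$, using crucially that $\alpha = 4/c(\Sigma)$ so that $c(\Sigma) \cdot |I| \approx 4\eps L$: if $|t| > c(\Sigma)|I|$ then part (b) forces $\ed(X[I], X[J]) \ge 2c(\Sigma)|I| - O(\eps L) \approx 8\eps L - O(\eps L)$, contradicting $\ed(X[I], X[J]) \le (4 + 2c')\eps L$ (here one uses $c' = 1/C' \le 1$ and that $\eps$ is small); hence $|t| \le c(\Sigma)|I|$, and then part (a) gives $\ed(X[I], X[J]) = 2|t| \pm O(\eps L)$, so $|t| \le (2 + c')\eps L + O(\eps L)$, which (after folding the $O(\eps L)$ slack and the $\le \eps L$ discrepancy above into the stated constant) puts $T_2$ inside the interval~\eqref{eq:shift}. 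Finally, Corollary~\ref{cor:shift} (in the quantitative form of Lemmas~\ref{lem:small_overlap} and~\ref{lem:large_overlap}) fails at scale $|I| = \Theta(\alpha\overline{C}\ln n)$ with probability $e^{-\Omega(|I|)} = n^{-\Omega(\overline{C})}$; a union bound over the $O(\poly\log n)$ candidate pairs, with $\overline{C}$ taken large enough, makes the total failure probability $o(n^{-2})$.

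I expect the main obstacle to be precisely the constant bookkeeping in the rigidity step: one must verify that the choice $\alpha = 4/c(\Sigma)$ is large enough that every shift compatible with $\ed \le (4 + 2c')\eps L$ stays safely inside the linear ($\ed = 2|t|$) regime of Corollary~\ref{cor:shift}(a), while shifts just outside that regime already force $\ed \gtrsim 8\eps L$, and one must carefully control the $O(\eps L)$ errors introduced by the length mismatch between $X[I]$ and $X[J]$ (Corollary~\ref{cor:shift} concerns equal-length windows, whereas $|I|$ and $|J|$ differ by up to $2\eps L$) and by the discrepancy between $|t|$ and $\big|T_2 - T_1 - (1-\alpha\eps)L\big|$. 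Everything else — the alignment-restriction argument, the triangle inequality, the counting of candidate pairs, and the union bound — is routine.
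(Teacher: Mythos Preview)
Your argument has the same skeleton as the paper's: pass from the corrupted substrings to true substrings of $X$, then invoke Corollary~\ref{cor:shift} at scale $\alpha\eps L$. The difference is in how you make that passage, and the difference costs you exactly the constants you need.

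The paper exploits that the lemma assumes $|\wt R_i|=|R_i|=L$. Then the monotonicity~\eqref{eq:ed_mon} (applied to suffixes via string reversal) gives directly
\[
\ed\bigl(\wt R_1[-\alpha\eps L,-1],\,R_1[-\alpha\eps L,-1]\bigr)\le\ed(\wt R_1,R_1)\le\eps L,
\]
and similarly for the prefix of $\wt R_2$. So one can work with the \emph{fixed} equal-length windows $R_1[-\alpha\eps L,-1]$ and $R_2[1,\alpha\eps L]$, whose shift is exactly $T_2-T_1-(1-\alpha\eps)L$. If that shift exceeds $(2+2c')\eps L$, Corollary~\ref{cor:shift} (via $c\alpha=4\ge 2+2c'$) forces their edit distance to be at least $2(2+2c')\eps L$, and one triangle inequality yields $\ed(\wt R_1[-\alpha\eps L,-1],\wt R_2[1,\alpha\eps L])\ge(2+4c')\eps L>(2+2c')\eps L$, a clean contradiction with no slack.

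Your alignment-restriction step instead produces variable-length intervals $I,J$ with lengths in $[\alpha\eps L-\eps L,\alpha\eps L+\eps L]$, and then a length-equalization and a shift discrepancy that each cost up to $\eps L$ or $2\eps L$. Tracking these: your ``$8\eps L-O(\eps L)$'' is in fact only $\ge(6-2c)\eps L$ (since $2c|I|\ge(8-2c)\eps L$ and the equalization costs $2\eps L$), which must beat $(4+2c')\eps L$; this requires $c+c'<1$ and fails when $c'=1$, i.e.\ $C'=1$, precisely the case needed for Theorem~\ref{thm:main_eps}. And even when the large-shift case does close, the small-shift case only yields $|T_2-T_1-(1-\alpha\eps)L|\le(4+c')\eps L$, not $(2+2c')\eps L$. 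So the constant bookkeeping you flagged as the main obstacle is a genuine gap in your version; the fix is to drop the alignment restriction entirely and use~\eqref{eq:ed_mon} to land on exact-length windows from the start.
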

\begin{proof}
  Suppose that~\eqref{eq:shift} does not hold.  
  Then the overlap between $R_1 \left[ - \alpha \eps L, -1 \right]$ and $R_2 \left[ 1, \alpha \eps L \right]$ is less than 
  $\left( \alpha - \left( 2 + 2c'\right) \right) \eps L$. 
  Recall the definition of $c = c\left( \Sigma \right)$ from Corollary~\ref{cor:shift}. 
  Since $c \alpha \eps L = 4 \eps L \geq \left( 2 + 2c' \right) \eps L$, we can apply Lemmas~\ref{lem:small_overlap} and~\ref{lem:large_overlap} to get that 
  with probability $1 - o \left( n^{-2} \right)$ we have 
  $\ed \left( R_1 \left[ - \alpha \eps L, -1 \right], R_2 \left[ 1, \alpha \eps L \right] \right) \geq 2 \left( 2 + 2 c' \right) \eps L$. 
  By the triangle inequality this implies that 
  $\ed \left( \wt{R}_1 \left[ - \alpha \eps L, -1 \right], \wt{R}_2 \left[ 1, \alpha \eps L \right] \right) \geq \left( 2 + 4 c' \right) \eps L$, 
  which is a contradiction.  
\end{proof}
Since the probability that the conclusion of the lemma does not hold is $o \left( n^{-2} \right)$, 
we can take a union bound over all pairs of corrupted reads and have the conclusion of the lemma apply to all of them with probability $1 - o \left( 1 \right)$.

We are now ready to analyze the algorithm step by step. 
We show by induction that after each extension step the partially reconstructed sequence is a good approximation of 
a substring of the original sequence. 

\begin{lemma}\label{lem:analysis}
Let $X \in \Sigma^n$ be a uniformly random string, 
let $L = \left( \overline{C} / \eps \right) \ln \left( n \right)$ where $\overline{C}$ is a large enough constant, 
and let $N = C' N_{\cov} / \eps$. 
Let $\alpha = 4 / c \left( \Sigma \right)$, where $c \left( \Sigma \right)$ is given by Corollary~\ref{cor:shift}. 
Let $Y_i$ be the state of the partially reconstructed sequence $Y$ after $i$ corrupted reads have been processed by the algorithm; 
we have $Y_1 = \wt{R}_1$. 
Let $\tau_1$ be the number of reads processed in the first while loop of the algorithm, 
and let $\tau_2$ be the number of reads processed in the second while loop.  
Also let $\tau = \tau_1 + \tau_2$, the total number of reads processed during the algorithm. 
With probability at least $1 - \delta$ 
(over the choice of $X$ and the starting points of the reads in $\mathcal{R}$) 
we have the following: 
\begin{enumerate}[(a)]
 \item\label{lem:analysis_i} For every $i \leq \tau$ 
  there exist $a_i, b_i \in \left[ n \right]$ such that 
  $\left| a_i - b_i \right| \geq \left( 1 - \left( \alpha + 2 + 2c' \right) \eps \right) i L$ and 
  \begin{equation}\label{eq:Y_i}
   \ed \left( Y_i, X \left[ a_i, b_i \right] \right) \leq  \left( 3+2c' \right) \eps i L; 
  \end{equation}
 \item\label{lem:analysis_tau} $\tau \leq \frac{n}{\left( 1 - \left( \alpha + 2 + 2c' \right) \eps \right) L}$;
 \item\label{lem:analysis_X_tau} $\ed \left( X \left[ a_{\tau}, b_{\tau} \right], X \right) \leq 2 L$.
\end{enumerate}
\end{lemma}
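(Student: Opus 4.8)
The plan is to prove the three parts of Lemma~\ref{lem:analysis} together by induction on the number of processed reads, handling the first while loop first (so $i \le \tau_1$) and then noting that the second while loop is symmetric. The key structural input is Lemma~\ref{lem:close_reads}: whenever the algorithm uses a read $\wt R$ to extend the current $Y_{i}$, the starting point $T$ of $\wt R$ in $X$ is within $(2+2c')\eps L$ of $T_{k} + (1-\alpha\eps)L$, where $\wt R_k$ is the read whose suffix triggered the match. I would set $a_i = a_{i-1}$ and $b_i = T + L - 1$ (with $b_1 - a_1 = L - 1$, say $a_1 = T_1$, $b_1 = T_1 + L - 1$), so that the interval $[a_i, b_i]$ grows by exactly the shift amount $b_i - b_{i-1} = T - T_{k}$, which by Lemma~\ref{lem:close_reads} lies in $[(1-\alpha\eps)L - (2+2c')\eps L,\, (1-\alpha\eps)L + (2+2c')\eps L]$. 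Summing the lower bound over $i$ steps gives $|a_i - b_i| \ge (1-(\alpha+2+2c')\eps) i L$, which is part~\eqref{lem:analysis_i}'s length claim; part~\eqref{lem:analysis_tau} then follows immediately because $b_\tau - a_\tau \le n$ forces $\tau \le n / ((1-(\alpha+2+2c')\eps)L)$.

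The heart of the argument is the edit-distance bound~\eqref{eq:Y_i}, and here I would argue as follows. Inductively assume $\ed(Y_{i-1}, X[a_{i-1}, b_{i-1}]) \le (3+2c')\eps (i-1) L$. The new sequence $Y_i$ is the concatenation of $Y_{i-1}$ with the chunk $\wt R[1+\alpha\eps L, -1]$, which has length $L - \alpha\eps L$. Correspondingly $X[a_i, b_i]$ is $X[a_{i-1}, b_{i-1}]$ extended by the block $X[b_{i-1}+1, b_i]$ of length $b_i - b_{i-1} = T - T_k =: s$. Edit distance is subadditive under concatenation, so $\ed(Y_i, X[a_i,b_i]) \le \ed(Y_{i-1}, X[a_{i-1},b_{i-1}]) + \ed(\wt R[1+\alpha\eps L, -1], X[b_{i-1}+1, b_i])$. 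It remains to bound this last term by $(3+2c')\eps L$. Now $\wt R$ is a corrupted read of the true read $R = X[T, T+L-1]$ with $\ed(R, \wt R) \le \eps L$, so $\ed(\wt R[1+\alpha\eps L, -1], R[1+\alpha\eps L, -1]) \le \eps L$ (restricting a read alignment to a suffix can only lose edits — this is the analogue of~\eqref{eq:ed_mon}, or one quotes the general fact that $\ed$ of substrings under a fixed alignment is bounded by the total). And $R[1+\alpha\eps L, -1] = X[T + \alpha\eps L, T+L-1]$, while the target block is $X[b_{i-1}+1, b_i] = X[b_{i-1}+1, T+L-1]$; these two substrings of $X$ end at the same coordinate and their left endpoints differ by $T + \alpha\eps L - (b_{i-1}+1) = (\alpha\eps L - s) + (T - b_{i-1} - 1 + s)$. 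Using $b_{i-1} = T_k + L - 1$ (from the previous step) and $s = T - T_k$ one computes this offset to be $\alpha\eps L - s$ — wait, more carefully: $T + \alpha\eps L - b_{i-1} - 1 = T + \alpha\eps L - T_k - L = s + \alpha\eps L - L$, which is negative, meaning $R[1+\alpha\eps L,-1]$ actually starts to the \emph{left} of the target block, and the genuine comparison is between $X[T+\alpha\eps L, T+L-1]$ and $X[b_{i-1}+1, T+L-1]$ where $b_{i-1}+1 - (T+\alpha\eps L) = L - s - \alpha\eps L$. Since $s \in [(1-\alpha\eps)L - (2+2c')\eps L, (1-\alpha\eps)L + (2+2c')\eps L]$, this difference lies in $[-(2+2c')\eps L, (2+2c')\eps L]$, so deleting at most $(2+2c')\eps L$ symbols from one of these two same-suffix strings turns it into the other: $\ed(R[1+\alpha\eps L,-1], X[b_{i-1}+1, b_i]) \le (2+2c')\eps L$. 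Combining via triangle inequality: $\ed(\wt R[1+\alpha\eps L,-1], X[b_{i-1}+1,b_i]) \le \eps L + (2+2c')\eps L = (3+2c')\eps L$, which is exactly the per-step budget, completing the induction for~\eqref{lem:analysis_i}.

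For part~\eqref{lem:analysis_X_tau} I would use the coverage hypothesis: with $N = C' N_{\cov}/\eps$ reads, with probability at least $1 - \delta/2$ there is no gap larger than $1.1 c' \eps L$ between consecutive read starting points. The first while loop can only terminate because \emph{no} corrupted read has the required prefix-suffix match with the current rightmost read $\wt R_k$. But by Lemma~\ref{lem:close_reads} (or rather its converse direction, which I would need to spell out: a read whose true start $T$ lies in $[T_k + (1-\alpha\eps)L - c'\eps L, T_k + (1-\alpha\eps)L]$ automatically satisfies the matching condition, since then $\ed(R_k[-\alpha\eps L,-1], R[1,\alpha\eps L]) = 2 \cdot (\text{overhang}) \le 2(\alpha\eps)(\eps L)$-ish $\le 2c'\eps L$ by Lemma~\ref{lem:large_overlap} applied at the $\eps L$ scale, plus $2\eps L$ from corruption of both reads gives $\le (2 + 2c')\eps L$) — and the no-large-gaps property guarantees such a read exists unless $T_k + (1-\alpha\eps)L$ is within $O(\eps L) + 1.1 c'\eps L$ of $n - L + 1$, i.e.\ unless $\wt R_k$ is one of the last reads. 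Hence at termination $b_\tau = T_k + L - 1 \ge n - O(\eps L) \ge n - L$ (using $\eps \le 1/(3\alpha)$ so the constants are controlled), and symmetrically the second loop drives $a_\tau \le L$; therefore $\ed(X[a_\tau, b_\tau], X) \le a_\tau + (n - b_\tau) \le 2L$. Finally I would union-bound the failure events: Lemma~\ref{lem:close_reads} fails with probability $o(n^{-2})$ per pair, so $o(1)$ over all $O(N^2) = \poly(n)$ pairs; coverage fails with probability $\le \delta/2$; Corollary~\ref{cor:shift} fails with probability $o(1)$; altogether the bad event has probability at most $\delta$ for $n$ large. The main obstacle is getting the edit-distance bookkeeping in the induction exactly right — in particular justifying cleanly that restricting a read-to-corrupted-read alignment to a common suffix costs at most $\eps L$ edits, and tracking the two different "reference endpoints" (the true start $T$ of the new read versus the previous interval endpoint $b_{i-1}$) so that the offset between the compared $X$-substrings is provably $O((2+2c')\eps L)$; everything else is routine summation and union bounds.
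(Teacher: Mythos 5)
Your proposal is correct and follows essentially the same route as the paper's proof: the same choice of $a_i, b_i$ via the true starting positions, the same per-step decomposition by subadditivity into the inductive term plus $\ed\bigl( \wt{R}\left[1+\alpha\eps L,-1\right], X\left[b_{i-1}+1,b_i\right] \bigr)$, the same $\eps L + (2+2c')\eps L$ triangle-inequality bookkeeping using the suffix version of~\eqref{eq:ed_mon} and Lemma~\ref{lem:close_reads}, and the same coverage argument for why the loops only terminate near the ends of $X$. The only cosmetic difference is that you derive part~\eqref{lem:analysis_tau} from $b_\tau - a_\tau \le n$ rather than by summing the per-step increments of $\wt{T}_i$, which is equivalent.
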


\begin{proof}
  Part~\eqref{lem:analysis_i} of the lemma holds for $i = 1$ by choosing $a_1 = T_1$ and $b_1 = T_1 + L - 1$. 
  For larger $i$ we prove the statement by induction on $i$. 

  Suppose we are in the first while loop of the algorithm, i.e., $i \leq \tau_1$. 
  We set $a_i = a_1$ for all $i \leq \tau_1$ and only change $b_i$. 
  Let $\wt{T}_i = T \left( \wt{R}_{k \left( i \right)} \right)$, 
  where $k \left( i \right)$ is the index of the read chosen at the $i^{\text{th}}$ round, 
  and set $b_i := \wt{T}_i + L - 1$.
  As mentioned before, we may assume that there is no gap greater than $2c' \eps L$ in between subsequent starting points of the reads. 
  Therefore if $\wt{T}_i \leq n - 2L$, 
  then there must exist $R \in \mathcal{R}$ such that 
  $T \left( R \right) - \wt{T}_i \in \left[ \left( 1 - \alpha \eps \right) L - c'\eps L, \left( 1 - \alpha \eps \right) L + c'\eps L \right]$. 
  By the triangle inequality this implies that 
  $\ed \left( \wt{R}_{k \left( i \right)} \left[ -\alpha \eps L, -1 \right], \wt{R} \left[ 1, \alpha \eps L \right] \right) \leq \left( 2 + 2c' \right) \eps L$, 
  i.e., $\wt{R}$ satisfies the condition of the while loop. 
  Thus $\wt{T}_{\tau_1} > n - 2L$. 
  Now take any $\wt{R} \in \wt{\mathcal{R}}$ that satisfies the condition of the while loop. 
  By Lemma~\ref{lem:close_reads} we know that 
  $T \left( \wt{R} \right) - \wt{T}_i - \left( 1 - \alpha \eps \right) L \in \left[ - \left( 2 + 2c' \right) \eps L, \left( 2 + 2c' \right) \eps L \right]$. 
  By subadditivity and the induction hypothesis we have that 
  \begin{align*}
   \ed \left( Y_{i+1}, X \left[ a_{i+1}, b_{i+1} \right] \right) 
    &\leq 
    \ed \left( Y_i, X \left[ a_i, b_i \right] \right) + \ed \left( \wt{R} \left[ 1 + \alpha \eps L, -1 \right], X \left[ b_{i} + 1, b_{i+1} \right] \right) \\
    &\leq \left(3+2c'\right) \eps i L + \ed \left( \wt{R} \left[ 1 + \alpha \eps L, -1 \right], X \left[ b_{i} + 1, b_{i+1} \right] \right),
  \end{align*}
  so it suffices to estimate the latter term. By~\eqref{eq:ed_mon} we have that 
  $\ed \left( \wt{R} \left[ 1 + \alpha \eps L, -1 \right], R \left[ 1 + \alpha \eps L , L \right] \right) \leq \ed \left( \wt{R}, R \right) \leq \eps L$. 
  Using the definition of $b_i$ we have that 
  \[
   \ed \left( R \left[ 1 + \alpha \eps L, L \right], X \left[ b_{i} + 1, b_{i+1} \right] \right) = \left| \left( \wt{T}_i + L - 1 \right) - \left( T \left( R \right) + \alpha \eps L - 1 \right) \right| \leq \left( 2 + 2c' \right) \eps L,
  \]
  and so by the triangle inequality we have that 
  $\ed \left( \wt{R} \left[ 1 + \alpha \eps L, -1 \right], X \left[ b_{i} + 1, b_{i+1} \right] \right) \leq \left( 3 + 2c' \right) \eps L$,
  proving~\eqref{eq:Y_i} for all $i \leq \tau_1$. The proof for $i \in \left[ \tau_1, \tau_2 \right]$ is similar, except now $b_i = b_{\tau_1}$ and $a_i$ changes. 

  We proved that for all $i \leq \tau_1$ we have 
  $\wt{T}_{i+1} - \wt{T}_i \geq \left( 1 - \left( \alpha + 2 + 2c' \right) \eps \right) L$. 
  A similar statement holds for $i \in \left[ \tau_1, \tau_2 \right]$, and together these imply part~\eqref{lem:analysis_tau} of the lemma.

  Since we have $a_{\tau} \leq L$ and $b_{\tau} \geq n - L$, this implies part~\eqref{lem:analysis_X_tau} of the lemma.   
\end{proof}

Putting everything together and using the triangle inequality we get that the algorithm outputs an estimate $\wh{X}$ which satisfies 
\[
 \ed \left( X, \wh{X} \right) \leq \frac{3+2c'}{ 1 - \left( \alpha + 2 + 2c' \right) \eps } \eps n + 2 L,
\]
which proves Theorems~\ref{thm:main_eps} and~\ref{thm:main_eps_3}.

\section{Discussion and future work} \label{sec:discussion} 

We introduced an adversarial error model for the problem of sequence assembly from shotgun reads. 
Our main result shows that if the reads are long enough and there is high enough coverage, 
then approximate reconstruction of the original sequence is possible for almost all sequences. 
The main question our work leaves open is: 
what are the fundamental information-theoretic limits to approximate reconstruction? 
Given $\overline{L}$ and $N$, is approximate reconstruction possible? 
If so, what is the best approximation factor achievable? 
What is the best ``strategy'' for an adversary that can corrupt the reads?

The probabilistic model we consider for the sequence of interest is simplistic, 
and it would be worthwhile to consider more general distributions, such as a Markov chain model. 
However, in many genomes there are long repeats, which are not captured by a Markov model. 
A direction for future research is to understand the fundamental limits to approximate reconstruction for arbitrary sequences as a function of their (approximate) repeat statistics.

Our adversarial error model also contains a simplification: 
sequencing technologies typically do not have a uniform error rate. 
Instead, while the error rate is reasonably small for most reads, 
there are some where the error rate is large and the resulting reads are useless.  
Practitioners can often detect these bad reads and thus throw them away. 
A variant of our algorithm can also handle very bad reads if the quality of good and very bad reads are sufficiently separated: 
the very bad reads simply will not align anywhere and so will be thrown out. 
However, if there is a continuous spectrum of quality from good to very bad reads, the algorithm runs into issues due to the reads in the middle of the spectrum. 
We leave addressing this issue as a future challenge.


\section*{Acknowledgements}

The research of E.M.\ is supported by 
NSF grant CCF-1320105, DOD ONR grant N00014-14-1-0823, and Simons Foundation grant 328025. 
M.Z.R.\ thanks Jasmine Nirody and Rachel Wang for helpful discussions.


\bibliographystyle{abbrv}
\bibliography{bib}



\appendix

\section{Proofs of edit distance results} \label{sec:edit_distance_proofs} 

\begin{proof}[Proof of Lemma~\ref{lem:indpt}]
 For any $m$ and $n$ we clearly have 
\begin{multline*}
  \ed \left( X_{m+n}, Y_{m+n} \right) \\ 
  \leq 
  \ed \left( X_{m+n} \left[ 1 , m \right], Y_{m+n} \left[ 1 , m \right] \right) 
  +
  \ed \left( X_{m+n} \left[ m + 1 , m + n \right], Y_{m+n} \left[ m + 1 , m + n \right] \right).
\end{multline*}
Thus Kingman's subadditive ergodic theorem implies that 
$\lim_{m \to \infty} \frac{1}{m} \ed \left( X_m, Y_m \right) =: c_{\ind}$ 
exists almost surely. 
Clearly $c_{\ind} \geq 0$; 
what remains to show is that $c_{\ind} > 0$. 
We do this via a volume argument; 
we first present a simple argument and then refine it to get a better lower bound on $c_{\ind}$.

If $\ed \left( X_m, Y_m \right) \leq r$, 
then one can get from $X_m$ to $Y_m$ using at most $r$ deletions, insertions and substitutions. 
The locations of the at most $r$ deletions and substitutions can be chosen in at most $\binom{m}{r}$ ways, 
and the same holds for the locations of the at most $r$ insertions and substitutions. 
Given the locations of these, there can be at most $\left| \Sigma \right|^r$ subsequences in these locations. 
That is, the edit distance ball of radius $r$ around any point $x \in \Sigma^m$ contains at most 
$\binom{m}{r}^2 \left| \Sigma \right|^{r}$ 
points of $\Sigma^m$. 
For $r = \delta m$ we get 
\[
  \binom{m}{\delta m}^2 \left| \Sigma \right|^{\delta m} 
  \approx 
  2^{2 H \left( \delta \right) m} \left| \Sigma \right|^{\delta m},
\]
where 
$H \left( x \right) = - x \log_2 \left( x \right) - \left( 1 - x \right) \log_2 \left( 1 - x \right)$ 
is the binary entropy function. 
Note that the total number of sequences of length $m$ is $\left| \Sigma \right|^m$.
Let $\delta^{*} = \delta^{*} \left( \Sigma \right)$ be the unique solution in $\left( 0, 1 \right)$ of 
$4^{H\left( \delta \right)} \left| \Sigma \right|^{\delta} = \left| \Sigma \right|$. 
By the volume argument above we have that 
for any $\delta < \delta^{*}$  
the probability that
$\ed \left( X_m, Y_m \right) \leq \delta m$ 
is exponentially small in $m$. 
Thus 
$c_{\ind} \geq \delta^{*} > 0$. 
In particular, 
for $\left| \Sigma \right| = 2$, we have $\delta^* \approx 0.09488$, and 
for $\left| \Sigma \right| = 4$, we have $\delta^* \approx 0.22709$.

We can obtain a better bound by a slightly more careful argument. 
Again, if $\ed \left( X_m, Y_m \right) \leq r$, 
then one can get from $X_m$ to $Y_m$ using at most $r$ deletions, insertions and substitutions. 
Suppose that 
the number of deletions is $D$, 
the number of insertions is $I$, and 
the number of substitutions is $S$. 
Since $X_m$ and $Y_m$ have the same length, 
we have $D = I$ 
and also $D + I + S \leq r$, i.e., $S \leq r - 2D$. 
The locations of the $D$ deletions can be chosen in at most $\binom{m}{D}$ ways, 
the locations of the $I$ insertions can be chosen in at most $\binom{m}{I}$ ways, 
while the locations of the $S$ substitutions can be chosen in at most $\binom{m}{S}$ ways. 
Given the locations of these, 
there can be at most $\left| \Sigma \right|^I$ subsequences in the locations of the insertions, 
and at most $\left| \Sigma \right|^S$ subsequences in the locations of the substitutions. 
Therefore the edit distance ball of radius $r$ around any point $x \in \Sigma^m$ contains at most 
\[
\max_{0 \leq D \leq r/2} \left\{ \binom{m}{D}^2 \binom{m}{r - 2D} \left| \Sigma \right|^{ r - D } \right\}
\]
points of $\Sigma^m$. 
For $r = \delta m$ and $D = \delta_D m$ we have
\[
  \binom{m}{\delta_D m}^2 \binom{m}{\left( \delta - 2\delta_D \right) m} \left| \Sigma \right|^{\left( \delta - \delta_D \right) m} 
  \approx 
  2^{\left( 2 H \left( \delta_D \right) + H \left( \delta - 2\delta_D \right)  - \delta_D \log_2 \left( \left| \Sigma \right| \right) \right) m} 
  \left| \Sigma \right|^{\delta m}. 
\]
Let $\delta^{**} = \delta^{**} \left( \Sigma \right)$ be the unique solution in $\left( 0, 1 \right)$ of 
\[
  2^{\max_{0 \leq x \leq \delta / 2} \left\{ 2 H \left( x \right) + H \left( \delta - 2x \right) - x \log_2 \left( \left| \Sigma \right| \right) \right\}} \left| \Sigma \right|^{\delta} 
  =
  \left| \Sigma \right|.
\]
The volume argument thus tells us that for every $\delta < \delta^{**}$ 
the probability that 
$\ed \left( X_m, Y_m \right) \leq \delta m$ 
is exponentially small in $m$. 
Thus 
$c_{\ind} \geq \delta^{**}$. 
In particular, 
for $\left| \Sigma \right| = 2$, we have $\delta^{**} \approx 0.15776$, and 
for $\left| \Sigma \right| = 4$, we have $\delta^{**} \approx 0.33832$.   
\end{proof}

\begin{proof}[Proof of Lemma~\ref{lem:small_overlap}]
By~\eqref{eq:ed_mon} we have that 
\[
\ed \left( X \left[ 1 , m \right], X \left[k + 1, k + m  \right] \right) 
\geq 
\ed \left( X \left[ 1 , dm \right], X \left[k + 1, k + dm  \right] \right).
\]
Since $k + 1 > dm$, the strings 
$X \left[ 1 , dm \right]$ 
and 
$X \left[ k + 1 , k + dm \right]$
are independent uniformly random strings of length $dm$. 
Recall the definition of $\delta^{**}$ from the proof of Lemma~\ref{lem:indpt} 
and let $\delta \in \left( 0, \delta^{**} \right)$. 
In the proof of Lemma~\ref{lem:indpt} we showed that 
the probability that 
$\ed \left( X \left[ 1 , dm \right], X \left[k + 1, k + dm  \right] \right) 
\leq \delta d m$
is exponentially small in $dm$. 
By taking a union bound over $k \in \left[ dm, m \right]$ 
we arrive at the desired result 
with, e.g., $\gamma \left( d, \Sigma \right) = 0.9 \times \delta^{**} \left( \Sigma \right) d$, 
and an appropriate constant $c'$. 
\end{proof}

Before proving Lemma~\ref{lem:large_overlap} we introduce a variant of the edit distance which is simpler to understand theoretically 
and for which we state and prove a result similar to Lemma~\ref{lem:large_overlap}. 
We denote by $\edd \left( x, y \right)$ the minimum number of deletion or insertion operations necessary to go from $x$ to $y$;  
that is, compared to the edit distance, substitutions are not allowed. 
Since a substitution can be simulated by a deletion followed by an insertion, we have that 
$\ed \left( x, y \right) \leq \edd \left( x, y \right) \leq 2 \ed \left( x, y \right)$ 
for all $x, y \in \Sigma^*$. 
The nice property of this distance is that 
$\edd \left( x, y \right) 
= \left| x \right| + \left| y \right| - 2 \LCS \left( x, y \right)$,
where $\LCS \left( x, y \right)$ is the length of the longest common subsequence (LCS) of $x$ and $y$. 
The following result is about the longest common subsequence of two random strings and is similar to Lemma~\ref{lem:large_overlap}. 

\begin{lemma}\label{lem:LCS}
Let $X \in \Sigma^{2m}$ be a uniformly random string. 
There exist positive constants $c = c \left( \Sigma \right)$ and $c' = c' \left( \Sigma \right)$ such that    
\[
    \LCS \left( X \left[ 1 , m \right], X \left[ 1 + k , m + k \right] \right) = m - k
\]
for all $k \le c m$ with probability at least $1 - e^{-c' m}$.
\end{lemma}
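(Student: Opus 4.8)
The plan is to obtain the lower bound $\LCS \ge m-k$ from an explicit common subsequence, and the matching upper bound (with the claimed probability) from a union bound over all candidate common subsequences of length $m-k+1$, where the success probability of each candidate is controlled through a constraint graph that turns out to be extremely sparse; essentially all of the work is in a combinatorial observation about that graph.

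For the lower bound, I would note that $X[k+1,m]$ is both a substring of $X[1,m]$ and the length-$(m-k)$ prefix of $X[1+k,m+k]$, hence a common subsequence, so $\LCS(X[1,m],X[1+k,m+k]) \ge m-k$ deterministically for every $k \le m$. For the upper bound it suffices to bound, uniformly over $k \le cm$, the probability that there is a common subsequence of length $m-k+1$ (any longer one can be truncated to this length). Fix $k$ and put $\ell = m-k+1$; recall that the $i$-th and $j$-th entries of $X[1,m]$ and $X[1+k,m+k]$ are $X_i$ and $X_{j+k}$. A length-$\ell$ common subsequence is encoded by index sets $S_A,S_B \subseteq [m]$ with $|S_A|=|S_B|=\ell$, whose elements I write as $i_1<\dots<i_\ell$ and $j_1<\dots<j_\ell$: monotonicity forces the matching $i_t \leftrightarrow j_t$, and the subsequence is realized precisely when $X_{i_t} = X_{j_t+k}$ for all $t \in [\ell]$. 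There are $\binom{m}{\ell}^2 = \binom{m}{k-1}^2 \le (em/k)^{2k}$ such pairs $(S_A,S_B)$, and for each one the realizing event is an intersection of equality constraints among the i.i.d.\ uniform coordinates of $X$, so its probability equals $|\Sigma|^{-(V-C)}$, where $V$ is the number of distinct values among $i_1,\dots,i_\ell,\,j_1+k,\dots,j_\ell+k$ and $C$ is the number of connected components of the graph $G$ on those coordinates whose edge set is $\bigl\{\{i_t,j_t+k\} : t\in[\ell]\bigr\}$.

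The crux is to show that $G$ is a disjoint union of paths and short cycles. Since the $i_t$ are distinct and the $j_t+k$ are distinct, each coordinate is an endpoint of at most two of the $\ell$ edges, so $G$ has maximum degree $\le 2$; hence every connected component is a path or a cycle, and there are no isolated vertices. Moreover $G$ has no self-loops: a self-loop would mean $i_t = j_t+k$, forcing $i_t \ge t+k$ (as $j_t \ge t$) while also $i_t \le m-(\ell-t) = t+k-1$ (as $i_t,\dots,i_\ell$ are $\ell-t+1$ distinct integers in $[m]$), a contradiction. And $G$ has no parallel edges: $\{i_s,j_s+k\} = \{i_t,j_t+k\}$ with $s<t$ would force $i_s = j_t+k$ and $j_s+k = i_t$, whence $i_s = j_t+k > j_s+k = i_t > i_s$. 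Therefore $G$ is a disjoint union of simple paths and simple cycles of length $\ge 3$, which yields $V-C = \ell - (\text{number of cycle components of }G) \ge \ell - \ell/3 = \tfrac23\ell \ge \tfrac23(m-k)$.

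Combining the two estimates, $\P\bigl(\LCS(X[1,m],X[1+k,m+k]) > m-k\bigr) \le (em/k)^{2k}\,|\Sigma|^{-\tfrac23(m-k)}$. Writing $k = \beta m$, the exponent of this bound is $m\bigl(2\beta\ln(e/\beta) - \tfrac23(1-\beta)\ln|\Sigma|\bigr)$, and since $\beta \mapsto \beta\ln(e/\beta)$ is increasing on $(0,1)$ and tends to $0$ as $\beta \to 0$, one can fix a constant $c = c(\Sigma) > 0$ small enough that this exponent is at most $-c''m$ for some $c'' = c''(\Sigma) > 0$ and all $1 \le k \le cm$; a final union bound over the at most $cm$ admissible values of $k$ then gives the lemma with an appropriate $c' = c'(\Sigma)$. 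The one step I expect to be nontrivial is the structural claim of the third paragraph: merely knowing that $G$ is simple would only give $V-C \gtrsim \sqrt{\ell}$, which is far too weak to beat the $\binom{m}{k-1}^2$ term, so it is essential to exploit that a monotone matching can produce only unions of paths and short cycles, forcing $V-C$ to be linear in $m$.
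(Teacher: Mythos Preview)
Your proof is correct and follows the same overall skeleton as the paper's: the lower bound is the trivial overlap, and the upper bound is a union bound over all size-$(m-k+1)$ noncrossing matchings, with the probability of each matching controlled by counting independent equality constraints among the i.i.d.\ coordinates of $X$. The ``no self-loop'' step (your pigeonhole $i_t\le t+k-1<t+k\le j_t+k$) is exactly the paper's key observation that a common subsequence longer than $m-k$ cannot use the trivial alignment $i_t\mapsto i_t-k$.

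Where you diverge is in how the constraints are counted. The paper argues directly that, because every match is nontrivial, at least $m-k-\ell$ of the coordinates in the overlap region $[k+1,m]$ are forced by earlier coordinates, yielding a per-matching probability of $|\Sigma|^{2\ell-m}$ and hence the final bound $k\binom{m}{k}^{2}|\Sigma|^{2k-m}$. You instead package the constraints into the graph $G$, observe it has maximum degree~$2$ and is simple, and bound $V-C\ge \tfrac{2}{3}\ell$ via ``cycles have $\ge 3$ edges''. This is cleaner and entirely sufficient, though numerically weaker (exponent $\tfrac{2}{3}(m-k)$ versus $m-2k$). In fact your framework gives more than you used: orienting each edge from $i_t$ to $j_t+k$, a directed cycle would require the cyclic shift $v_a\mapsto v_{a+1}$ to be order-preserving on a finite set, which forces it to be the identity; so $G$ is actually a disjoint union of paths, $V-C=\ell$ exactly, and your bound improves to $\binom{m}{k-1}^{2}|\Sigma|^{-(m-k+1)}$, slightly sharper than the paper's at this matching size. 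Your closing remark is therefore a bit misleading: it is the degree-$2$ structure (not merely simplicity) that makes $V-C$ linear, and you already exploited it.
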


\begin{proof}
  It is immediate that 
  \[
   \LCS \left( X \left[ 1 , m \right], X \left[ 1 + k , m + k \right] \right) \geq m - k,
  \]
  since the last $m - k$ coordinates of $X\left[1,m\right]$ 
  and the first $m-k$ coordinates of $X \left[ 1 + k , m + k \right]$ are the same. 
  What remains is to show that the probability of
  \begin{equation}\label{eq:large_LCS}
  \LCS \left( X \left[ 1 , m \right], X \left[ 1 + k , m + k \right] \right) > m - k 
  \end{equation}
  is exponentially small in $m$. 
  Note that if a common subsequence of $X \left[ 1 , m \right]$ and $X \left[ 1 + k , m + k \right]$
  is such that the $\ell^{\text{th}}$ coordinate of $X \left[ 1 , m \right]$ 
  is mapped to the $\left( \ell - k \right)^{\text{th}}$ coordinate of $X \left[ 1 + k , m + k \right]$ (the ``trivial'' map), 
  then this subsequence can have length at most $m - k$, 
  since in $X \left[ 1, m \right]$ there are only $m - \ell$ coordinates to the right of this coordinate, 
  while in $X \left[ 1 + k , m + k \right]$ there are only $\ell - k - 1$ coordinates to the left of this coordinate. 
  So if a common subsequence has length greater than $m - k$, then every coordinate is mapped ``nontrivially''. 
  This then creates many constraints on the pair of sequences and so there will not be many of them, as we now argue.

  Suppose that 
  $\LCS \left( X \left[ 1 , m \right], X \left[ 1 + k , m + k \right] \right) = m - \ell$
  for some $\ell \in \left\{ 0, 1, \dots, k - 1 \right\}$. 
  A noncrossing matching between $m - \ell$ coordinates of the two sequences 
  is characterized by the $\ell$ coordinates in each sequence that are not part of the matching; 
  the remaining pairs of the matching are determined due to the noncrossing property. 
  Thus there are $\binom{m}{\ell}^2$ such matchings. 
  The coordinates of a LCS between the two sequences corresponds to a noncrossing matching between the two, 
  and it imposes conditions on the values of these coordinates. 
  If 
  $\LCS \left( X \left[ 1 , m \right], X \left[ 1 + k , m + k \right] \right) = m - \ell$
  then all but $\ell$ coordinates of $X \left[ 1 + k , m + k \right]$ are determined by $X \left[ 1 , m \right]$. 
  Furthermore, there are at least $m - k - \ell$ coordinates $j \in \left[ 1 + k, m \right]$ such that 
  the $\left( j - k \right)^{\text{th}}$ coordinate of $X \left[ 1 + k , m + k \right]$ is in the matching, 
  and thus the value of $X \left[ j \right]$ is determined by a previous value $X \left[ i \right]$ for some $i < j$. 
  This means that at most $k + \ell$ coordinates of $X \left[ 1, m \right]$ are not determined by the value of a previous coordinate. 
  So the probability of any given noncrossing matching being a common subsequence is at most 
  $\left| \Sigma \right|^{\left( k + 2\ell \right) - \left( m + k \right)} = \left| \Sigma \right|^{2\ell - m}$. 
  We have thus shown that the probability of~\eqref{eq:large_LCS} is at most 
  \[
   \sum_{\ell=0}^{k-1} \binom{m}{\ell}^2 \left| \Sigma \right|^{2\ell - m} \leq k \binom{m}{k}^2 \left| \Sigma \right|^{2k - m}.
  \]
  When $k = dm$, then this is approximately 
  $\left( d m \right) \times 2^{\left\{ 2 H \left( d \right) + \left( 2d - 1 \right) \log_2 \left( \left| \Sigma \right| \right) \right\} m}$,
  which goes to zero exponentially in $m$ if $d \in \left( 0, 1 \right)$ is small enough. 
  Taking a union bound over $k$ we have that this holds for all $k \in \left\{ 0, 1, \dots,  dm \right\}$ simultaneously. 
\end{proof}

\begin{proof}[Proof of Lemma~\ref{lem:large_overlap}] 
The proof is very similar to the one above. 
It is again immediate that 
$\ed \left( X \left[ 1, m \right], X \left[ 1 + k, m + k \right] \right) \leq 2k$, 
since one can obtain $X \left[ 1 + k, m + k \right]$ from $X \left[ 1, m \right]$
by first deleting the first $k$ coordinates of $X\left[1,m\right]$ 
and then inserting $X \left[ m + 1 , m + k \right]$ at the end of the sequence. 
What remains is to show that the probability of
\begin{equation}\label{eq:large_ed}
  \ed \left( X \left[ 1 , m \right], X \left[ 1 + k , m + k \right] \right) < 2k 
\end{equation}
is exponentially small in $m$. 
If~\eqref{eq:large_ed} holds then one can get 
from $X \left[ 1 , m \right]$ to $X \left[ 1 + k , m + k \right]$ 
by first performing $S$ substitutions to get $X' \left[ 1, m \right]$, 
then performing $D$ deletions and finally $I$ insertions. 
These quantities have to satisfy $S + D + I \leq 2k - 1$ and $D = I$. 
We thus have that 
$\edd \left( X' \left[ 1, m \right], X \left[ 1 + k, m + k \right] \right) \leq 2k - 1 - S$, 
and so 
$\LCS \left( X' \left[ 1, m \right], X \left[ 1 + k, m + k \right] \right) \geq m - k + \left( 1 + S \right) / 2$. 
Let $m - \ell := \LCS \left( X' \left[ 1, m \right], X \left[ 1 + k, m + k \right] \right)$. 
Again, the number of such noncrossing matchings is $\binom{m}{\ell}^2$. 
As in the proof of the previous lemma, 
the noncrossing matching corresponding to such a LCS has to map every coordinate ``nontrivially''. 
Every such matching imposes constraints on $X' \left[ 1, m \right]$ and $X \left[ 1 + k, m + k \right]$, 
and while there are less constraints as before---since $X' \left[ 1, m \right]$ differs from $X \left[ 1, m \right]$ in $S$ substitutions---we can still show that the probability of~\eqref{eq:large_ed} is at most
\[
 k \binom{m}{2k} \binom{m}{k}^2 \left| \Sigma \right|^{3k - m}.
\]
When $k = dm$, then this is approximately
$\left( d m \right) \times 2^{\left\{ 2 H \left( d \right) + H \left( 2 d \right) + \left( 3d - 1 \right) \log_2 \left( \left| \Sigma \right| \right) \right\} m}$,
which goes to zero exponentially in $m$ if $d \in \left( 0, 1 \right)$ is small enough; 
in particular this happens when $d \leq 0.0846092$. 
Taking a union bound over $k$ we have that this holds for all $k \in \left\{ 0, 1, \dots,  dm \right\}$ simultaneously. 
\end{proof}

\end{document}